\newtheorem{proposition}{Proposition}[section]
\newcommand{\ie}[0]{\textit{i.e.},}
\newcommand{\etal}[0]{\textit{et al.}}
\newcommand{\vs}[0]{\textit{vs.}}
\begin{document}

\title{Scalable distributed service migration via Complex Networks Analysis}
\author{\IEEEauthorblockN{\large Panagiotis Pantazopoulos\IEEEauthorrefmark{1} \hspace{20pt} Merkourios Karaliopoulos\IEEEauthorrefmark{2} }
\\
\IEEEauthorblockA{\IEEEauthorrefmark{1}
 Department of Informatics and Telecommunications\\
 National \& Kapodistrian University of Athens\\
 Ilissia, 157 84 Athens, Greece\\
 Email: \{ppantaz, ioannis\}@di.uoa.gr}
\and
\IEEEauthorblockN{\large \hspace{18pt}  and \hspace{20pt} Ioannis Stavrakakis\IEEEauthorrefmark{1}}
\\
\IEEEauthorblockA{\IEEEauthorrefmark{2}
Department of Information Technology \\
and Electrical Engineering\\
ETH Zurich, Zurich, Switzerland  \\
 Email: karaliom@tik.ee.ethz.ch}

% 
% \title{Scalable distributed service migration via Complex Networks Analysis}
% \author{\IEEEauthorblockN{Panagiotis Pantazopoulos \hspace{21pt} Merkourios Karaliopoulos \hspace{15pt} and \hspace{15pt} Ioannis Stavrakakis}\\
% \IEEEauthorblockA{Department of Informatics and Telecommunications\\
% National \& Kapodistrian University of Athens\\
% Ilissia, 157 84 Athens, Greece\\
% Email: \{ppantaz, mkaralio, ioannis\}@di.uoa.gr}

\thanks{This work has been supported by the IST-FET project SOCIALNETS (FP7-IST-217141).}}

\maketitle
\begin{abstract}

With social networking sites providing increasingly richer
context, User-Centric Service (UCS) creation is expected to
explode following a similar success path to User-Generated
Content. One of the major challenges in this emerging highly
user-centric networking paradigm is how to make these exploding in
numbers yet, individually, of vanishing demand services available
in a cost-effective manner. Of prime importance to the latter (and
focus of this paper) is the determination of the optimal location
for hosting a UCS. Taking into account the particular
characteristics of UCS, we formulate the problem as a facility
location problem and devise a distributed and highly scalable
heuristic solution to it.

Key to the proposed approach is the introduction of a novel metric
drawing on Complex Network Analysis. Given a current location of
UCS, this metric helps to a) identify a small subgraph of nodes
with high capacity to act as service demand concentrators; b)
project on them a reduced yet accurate view of the global demand
distribution that preserves the key attraction forces on UCS; and,
ultimately, c) pave the service migration path towards its optimal
location in the network. The proposed iterative UCS migration
algorithm, called cDSMA, is extensively evaluated over synthetic
and real-world network topologies. Our results show that cDSMA
achieves high accuracy, fast convergence, remarkable insensitivity
to the size and diameter of the network and resilience to
inaccurate estimates of demands for UCS across the network. It is
also shown to clearly outperform local-search heuristics for
service migration that constrain the subgraph to the immediate
neighbourhood of the node currently hosting UCS.

\end{abstract}

\section{Introduction}\label{sec:intro}

% motivate the User-generate service paradigm
One of the most significant changes in networked communications
over the last few years concerns the role of the \emph{end-user}.
Till recently the end-user has been the \emph{consumer} of content
and services generated by explicit entities called content and
service providers, respectively. Nowadays, the Web2.0 technologies
have resulted in/enabled a paradigm shift towards more
user-centric approaches to content generation and provision. The
first strong evidence of this shift has been the abundance of
\emph{User-Generated Content} ($UGC$) in social networking sites,
blogs, wikis, or video distribution sites such as YouTube, which
motivated even the rethinking of the Internet architecture
fundamentals \cite{contnets}, \cite{Jacobson09}. The
generalization of the $UGC$ concept towards services is
increasingly viewed as the next major trend in user-centric
networking \cite{GalisFIA}.

% explain the User-generate service paradigm and its requirements
The user-oriented service creation concept aims at engaging
end-users in the generation and distribution of service
components, more generally \emph{service facilities} \cite{ugs09}.
To facilitate the wide proliferation of the so-called
\emph{User-Generated Service} ($UGS$) paradigm, some key technical
challenges should be addressed such as a) the design of simple
programming interfaces that will enable the involvement of
end-users without strong programming background; and b) the
deployment of scalable distributed mechanisms for discovering,
publishing, and moving service facilities within the network.

Our work focuses on the second challenge. In particular, it
addresses the problem of optimally placing service facilities
within the network so that the cost of accessing and using them is
minimized. The problem is typically viewed as an instance of the
family of \emph{facility location} problems and is formulated as
an $1$- or, more generally, $k$-median problem, depending on
whether facilities can be replicated in the
network~\cite{mirchandani1990}. The main bulk of proposed
solutions to the problem are centralized (see, for
instance,~\cite{JainVazi01}): the optimal service location is
determined by a single entity that possesses global information
for both the network topology and distribution of service demand
across the network. Nevertheless, the service deployment scenarios
considered in this work, involving the flexible and scalable
deployment of many distributed user-generated services within
possibly large networks, bring traditional centralized approaches
to the problem solution to their limits.
%When moving towards feasible
%solutions for (\textit{User-Generated}) service deployment in
%large-scale distributed systems,
%a number of critical limitations concerning the above
%traditional facility location approach, can be identified.
%The $1(k)$-median problem is solved in a centralized manner
%\textbf{[refs]}.
%The globally optimal location of the service is
%determined by a single entity that should possess
%~\footnote{Let alone computation time that
%might turn out to be prohibitive for large-scale networks.}
%explicit global information for both the network topology and
%distribution of service demand across the network.
Gathering the required information to a single physical location
%in a scalable manner
is already a challenge. Furthermore, the centralized treatment of
the problem assumes the existence of an ideal super node bearing
the burden of decision-making.
%introduces an implicit
%logical hierarchy in the role of nodes in the network, placing
%asymmetrically higher burden on the decision-making entity.
%~\footnote{Let alone computation time that
%might turn out to be prohibitive for large-scale networks.}
This burden is primarily due to the computationally intensive $1(k)$-median problem
\cite{mirchandani1990}.
% which becomes a
%difficult task for large-scale networks, as we explain in Section
%\ref{sec:problem}.
Given that (minor) user demand shifts or
network topology changes may alter the optimal service location,
it is neither practical nor affordable to each time centrally
compute a new problem solution.

In our paper we propose solutions for overcoming these
limitations. The approach we have taken is highly decentralized;
%individual nodes (provisionally) hosting
the service facilities migrate from the node that makes them
available towards the minimum-cost location by traversing a
cost-decreasing path. It is also scalable;
%decide locally \emph{whether} the service should stay with them or
%\emph{migrate} to another lower-cost node.
similar to other proposals in literature, it moves the service
facilities in the network by iteratively solving \emph{locally} a
much smaller scale 1(k)-median optimization problem than what a
global centralized solution would require. Nevertheless, it
departs from standard practices in the way it selects the nodes
for the local 1(k)-median problem. State-of-the-art approaches
(for example, \cite{OikonomouSX08, SLOSB-DSMSISD-10}) recruit
those nodes from their immediate local neighborhood. On the
contrary, our algorithm invests additional effort to make a more
informed selection of these nodes, which promotes the ``correct"
directions of migration towards the globally optimal locations.

%We show later in \ref{subsec:positioning} that t
%Topological and demand information still needs to propagate in the
%network and in Section~\ref{sec:discussion} we discuss mechanisms
%that could ease this task.
%However, the service migrates within
%the network over a few computationally cheaper hops that form a
%cost-decreasing path towards its optimal location.
%The computational load is spread amongst the nodes on the migration
%path and consists in the iterative solution of the 1-median
%problem at a much smaller scale than what a global centralized
%solution would require.
To achieve this, we devise a metric, called \emph{weighted
Conditional Betweenness Centrality (wCBC)}, that draws on
\emph{Complex Network Analysis} (CNA). CNA provides a theoretical
framework for unified modeling and analysis of several types of
networks and the expectations in the networking community are that
its insights could benefit the design of more efficient network
protocols. In our work, the CNA-inspired metric helps the service
migrate through the network both faster and towards better service
locations.
%As it will is explained in detail in Sections
%\ref{sec:metric} and \ref{sec:analysis},

\begin{figure}[tbp]
%\label{wons}
\begin{minipage}[b]{.52\linewidth}
\centering
\framebox(200,170){\includegraphics[width=4.5cm]{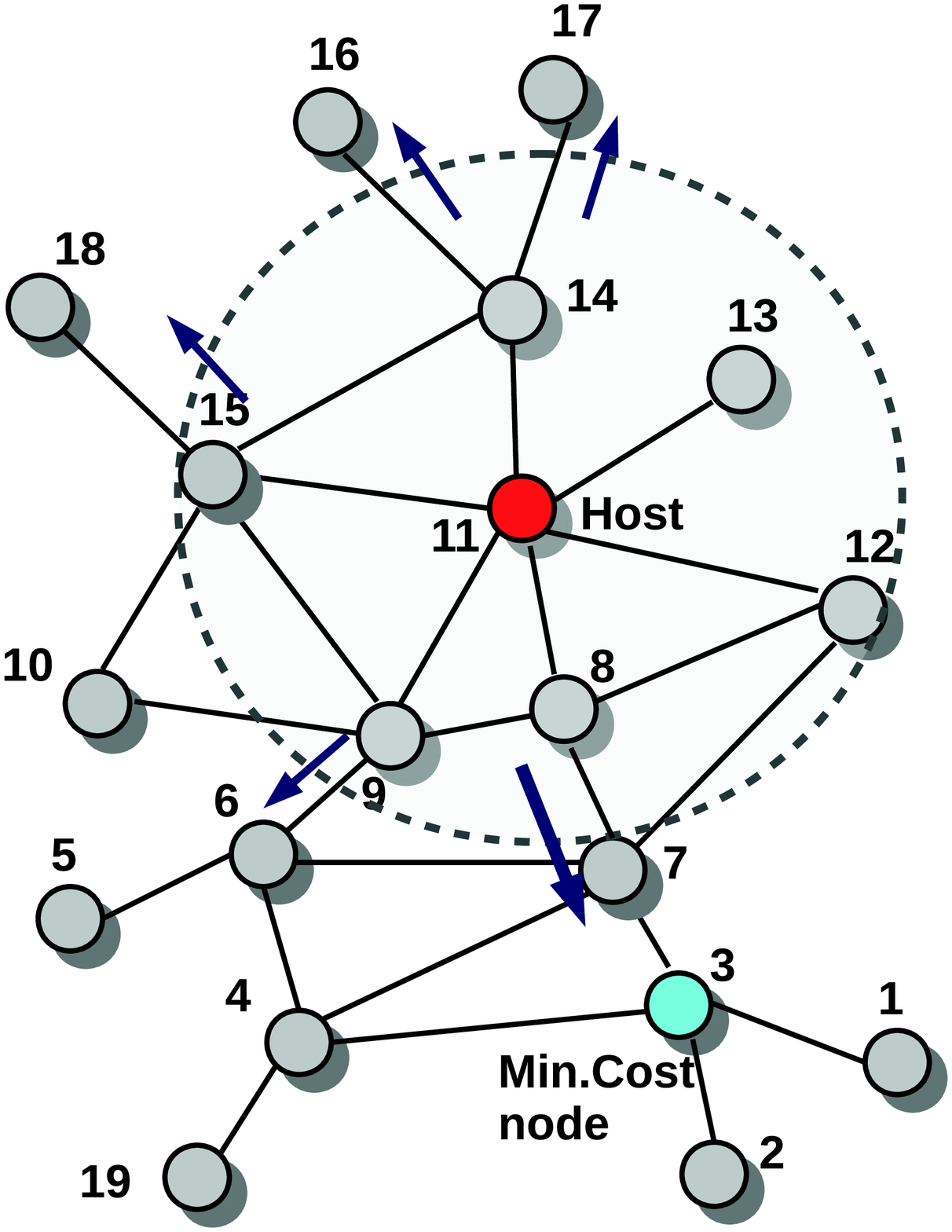}}
%\caption[short]{}
\end{minipage}
\hspace{0.15cm}
\begin{minipage}[b]{0.3\linewidth}
\centering
\framebox(200,170){\includegraphics[width=4.5cm]{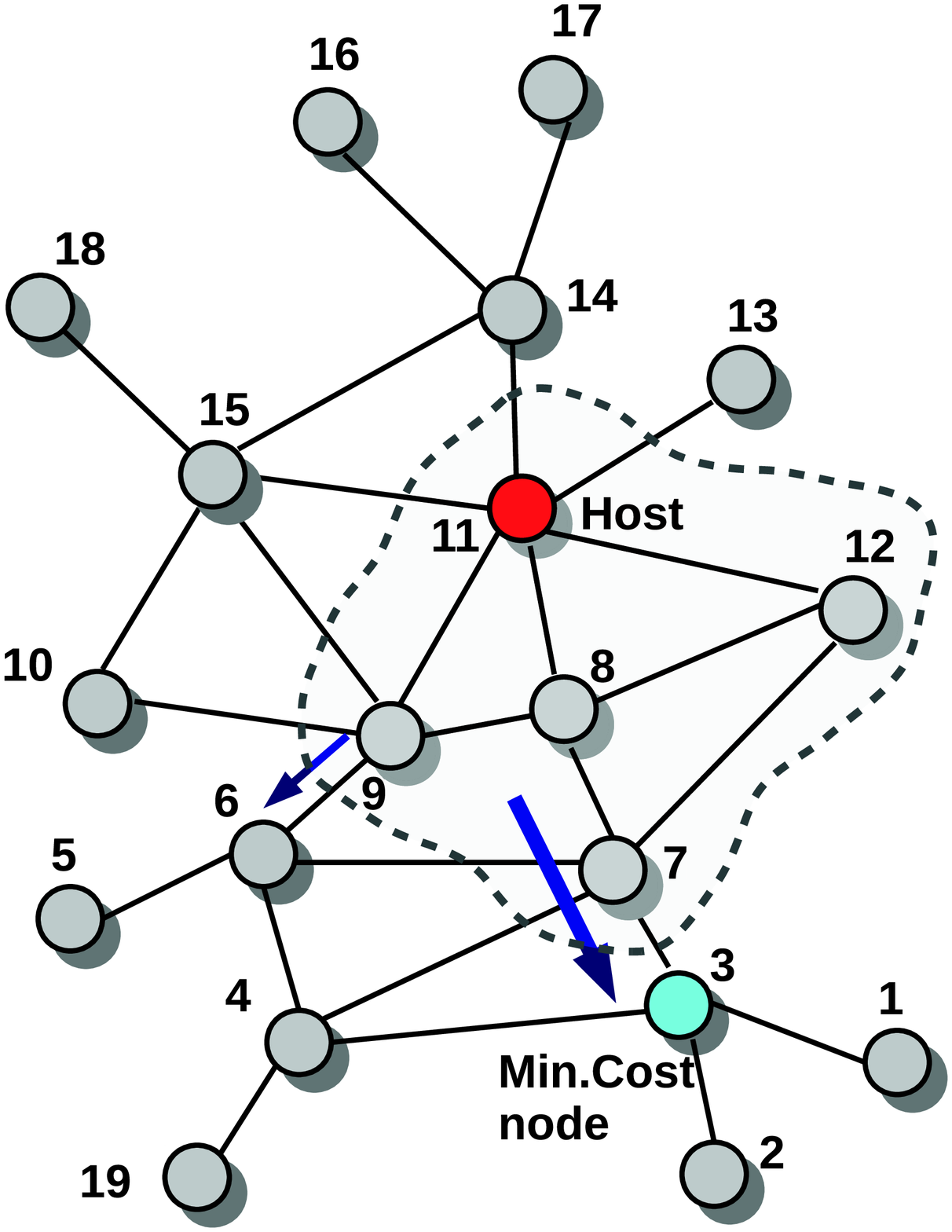}}
%\caption[short]{}
\end{minipage}
\caption{Looking for the next service migration
hop: local-search heuristics(left) \vs~cDSMA(right).}
\label{attraction_plot} \vspace{-0.2 in}
\end{figure}

In each service migration step, the metric serves two purposes.
Firstly, it \emph{identifies} those nodes that contribute most to
the aggregate service access cost and pull the service strongly in
their direction; namely, nodes that hold a central position within
the network topology and/or route large amounts of the demand for
it. Secondly, it correctly \emph{projects} the \emph{attraction
forces} these nodes exert to the service upon the current service
location and facilitates a migration step towards the optimal
location (Fig. \ref{attraction_plot}). It is really tempting to
draw an analogy between our mechanism and a directional antenna:
through the directed search (main radiation lobes) the mechanism
amplifies the impact of the major service demand attractors (signal
sources) while suppressing less important demand poles (noise)
that blind local search (omnidirectional antenna) would induce.

%on the small subset of nodes on the service migration path,
%For given service placement, the metric reflects how much demand
%for the service is routed through each network node. In each
%service migration step, the CNA-driven metric serves two purposes.
%Firstly, nodes with the top $wCBC$ scores are selected as members
%of a significantly smaller subgraph of candidate service hosts,
%wherein the $1$-median problem is solved (\emph{$1$-median
%subgraph}). Secondly, through simple manipulation of the metric,
%the service demand originating from the remaining nodes is
%accurately mapped on the $1$-median subgraph nodes.

%To achieve this, we draw on \emph{Complex Network Analysis} (CNA)
%and devise a metric, called $weighted$ $Conditional$ $Betweenness$
%$Centrality$ $(wCBC)$. For given service placement, the metric
%reflects how much demand for the service is routed through each
%network node. In each service migration step, the CNA-driven
%metric serves two purposes. Firstly, nodes with the top $wCBC$
%scores are selected as members of a significantly smaller subgraph
%of candidate service hosts, wherein the $1$-median problem is
%solved (\emph{$1$-median subgraph}). Secondly, through simple
%manipulation of the metric, the service demand originating from
%the remaining nodes is accurately mapped on the $1$-median
%subgraph nodes.
%This way, the service facilities move towards
%their optimal location through a number of computationally lighter
%steps.

We detail the metric and our algorithm, hereafter called cDSMA, in
Sections \ref{sec:metrics} and \ref{sec:analysis}, respectively,
and evaluate them extensively in Section \ref{sec:analysis}.
When running over real-world ISP topologies, the cDSMA achieves
remarkably high accuracy and fast convergence even when the
$1$-median problem iterations are solved locally with very few
nodes, less than ten. Moreover, the cDSMA performance is
practically invariable with the network size and diameter as well
as the spatial dynamics of the service demand distribution across
the network. Compared with distributed local-search policies,
%where the service selects the next migration hop by iteratively
%solving the $1$-median problem within the local neighborhood of
%its current location,
cDSMA yields consistently better service placements, which are not
dependent on the location of the service generation. We summarize
our findings, discuss practical implementation aspects of our
mechanism, and sketch possible extensions to this work in Section
\ref{sec:discussion}.

% \textit{Contribution}: Herein we develop a
%scalable heuristic approach to cope with the high complexity and
%(topology-demand) knowledge limitations of the distributed-fashion
%optimization of (User Generated) service placement. Our approach
%amounts to iteratively solving the well-known \textit{1-median}
%problem using (each time) a restricted -in size- solution space.
%To do so, we introduce an innovative CNA-based metric with which
%we single out those nodes that are somehow capable of inferring
%the demand load (for the available service) coming from those left
%out of the problem solution. \textbf{This method, tested through
%simulation against various scenarios, is proved to outperform the
%state-of-the-art distributed facility location algorithms}
%\textit{sooth it a little bit!!}.
%
%The rest of the  paper is organized as follows.....

%\section{Problem Statement and solution requirements}\label{sec:problem}
\section{Service placement: a facility location problem}
\label{sec:problem}
% ===========Merkouris' comments========================================================
% \textbf{Main points:} \\
% a) Formulation of the problem in terms of the facility location problem (1-median variant) \\
% b) Explain the NP-hardness of the problem to motivate the use of
% approximate solutions --summarize the approach (CNA-inspired)
% ======================================================================================

%In the sequel we provide the mathematical notation of one of the
%most studied problems (see section~\ref{Related_work} for a
%variety of approaches) in operations research, formulated to
%address the service placement problem.

% is the facility location problem where there is.... ..
% Such optimization problems are computationally challenging, if not
% practically impossible to solve (for general topologies the
% \textit{k-median} is NP-hard~\cite{kariv1979}), often requiring
% heuristics for limiting the size of the solution space.

%\subsection{Service placement as a facility location problem}

The optimal placement of service facilities in some network
structure has been typically tackled as an instance of the
facility location problem~\cite{mirchandani1990}. Input to the
problem is the topology of the network nodes that may host
services and/or network users. The objective is to place services
in a way that minimizes the aggregate cost of accessing them over
all network users.
%The main
%question here is to identify the network nodes of a given topology
%that can host some service, so that the latter is accessed by the
%network users residing at the various network nodes with the
%minimum average cost.

More precisely, the network topology is represented by an
undirected connected graph $G(V,E)$, where $V$ is the set of nodes
and $E$ is the set of edges(links) connecting them. Without loss
of generality, we assume that all links have a unit weight and
thus the minimum cost path $d(n,m)$ between nodes $n$ and $m$,
corresponds to the minimum hop count path linking $n$ and $m$.
Each network node $n$ serves users that access the service with
different intensity(frequency), generating an aggregate demand
$w_{n}$ for the service.
% according to the preferences of users
%behind it.
When there are $k$ service facilities available, the
problem of their optimal placement in the network can be
formulated as the classical \textit{k-median} problem; namely, the
set $\mathcal{F}$ of $k$ nodes ($|\mathcal{F}|=k$) that are
selected to host service facilities minimize the aggregate service
access cost:

\begin{equation}\label{eqn:k-median}
%Cost(\mathcal{F}) =\sum_{n\in\mathcal{V}} w(n) \cdot min_{x_{j}
%\in \mathcal{F}}\{d(x_{j},n)\}
Cost(\mathcal{F}) =\sum_{n\in\mathcal{V}} w(n) \cdot
\underset{x_{j} \in \mathcal{F}} {\operatorname{min}}
\{d(x_{j},n)\}
\end{equation}
where $min\{d(x_{j},n)\}$ denotes the distance between each node
$n$ and its closest service host node $x_{j} \in \mathcal{F}$. In
this paper we focus our attention to the single service facility
scenario, $|\mathcal{F}|=1$. Practically, service facilities
migrate across the whole network seeking their optimal
placement without the possibility for replication. The $1$-median
formulation matches better the expectations about the
User-Generated Service paradigm, \ie many different services
generated in various places in the network raising 
small-scale interest so that replication of their facilities be
less attractive.
%of relatively
%low demand and seeks to find its optimal placement within a
%network.
The respective 1-median problem formulation, minimizing the
access cost of a service located at node $k\in V$
%and a cost to be minimized
is given by:
\begin{equation}\label{eqn:1-median}
 Cost(\mathcal{F}) =\sum_{n\in\mathcal{V}} w(n) \cdot d(k,n)
\end{equation}
In general topologies, optimization problems such as 1-median and
$k$-median, are NP-hard\footnote{While in special cases like the
tree topology with equal link weights, the 1-median may
need $\mathcal{O} ({|E|}^2) $ time to solve (using exhaustive
search~\cite{kariv1979} or even faster for more efficient
algorithms~\cite{Goldman}), those problems are typically
characterized by the computationally difficult case.} requiring
global information about the network topology and generated demand
load~\cite{kariv1979}. Thus, so far the main bulk of relevant
theoretical work is in the field of approximation algorithms,
where various techniques have been applied~\cite{alg_k_Med}.

\subsection{Exploiting CNA to overcome limitations}

We make use of Complex Network Analysis (CNA) to dramatically
reduce the scale of the 1-median problem that network nodes need
to solve. We introduce a metric, called Weighted Conditional
Betweenness Centrality (wCBC), which draws on a known CNA metric
(Betweenness Centrality~\cite{citeulike:392801}) and assesses the
value of nodes as candidate hosts of the service. The nodes with
the highest $wCBC$ values induce a small subgraph on the original
network graph, wherein the original optimization problem can be
solved more efficiently for the next-best service location in the
network. Besides identifying the highest-value nodes for hosting
the service, the metric directly lets us map the demand of the
rest of the network nodes on this subgraph. We introduce our
metric in Section \ref{sec:metrics}, whereas the demand mapping
process and the overall algorithm are detailed in Section
\ref{sec:analysis}.

%To address the placement problem in a distributed fashion we argue
%that the identification of present structures (i.e. subgraph) of
%``significant'' nodes -through CNA analysis- can help us harness
%the computationally intensive task of solving the large
%optimization problem; as such, we restrict the -otherwise- global
%optimization to a  (small-numbered) sequence of small-scale ones
%(each of them solved within a subgraph) and, thus, provide a
%scalable approach.

%For the needs of such a ``local-search-like'' approach,  we should
%somehow be aware of, or otherwise induce the demand (for the
%service) stemming from the rest of the network's nodes, not
%included in the current optimization's area.

%A similar decentralized approach was taken by Smaragdakis \etal
%in~\cite{SLOSB-DSMSISD-10}. Compared to our approach, the authors
%were practically fixing a-priori the feasible solution space for
%the local optimization problems to the R-hop neighborhood of the
%current service locations -they were considering the k-median
%problem variant. Small values of R, in the order of 1 or 2, ease
%information gathering but slow down the migration process. On the
%contrary, our metric exploits CNA insights to naturally expand the
%search area for the local optimization problem and drive us faster
%towards the optimal service location. Notably, the service
%migration time to the optimal location is largely independent of
%the initial service placement, \ie the node first deploying the
%service. We elaborate on how the approaches relate in Section
%\ref{subsec:positioning}.
A similar decentralized approach to the service placement problem
was taken by Smaragdakis \etal in~\cite{SLOSB-DSMSISD-10},
although they allow for service replication. Compared to our
approach, the authors practically fix a-priori the $k$ $1$-median
subgraphs to the $R$-hop neighborhood of the current service
locations. Intuitively, small values of $R$, in the order of one
or two, ease information gathering but slow down the migration
process. We show later in Section \ref{subsec:positioning} that
this local-search approach \emph{also} adds ``noise'' to the
algorithm's effort to push the service towards the optimal
location in the network. As a result, the algorithm may be trapped
in locally anticipated as optimal, yet globally suboptimal,
locations. On the contrary, our metric exploits CNA insights to
naturally extract a more informed 1-median subgraph and drive us
faster towards the optimal service location.
%Notably, the service
%migration time to the optimal location is largely independent of
%the initial service placement, \ie the node first deploying the
%service.
We elaborate on how the two approaches relate in Section
\ref{subsec:positioning}.

% The fairness argument needs some discussion, depends how fairness is defined.

%The former approach (i.e., demand awareness) applied by
%Smaragdakis et al.~\cite{SLOSB-DSMSISD-10} involves the local
%gathering of the demand load statistics that reach the ring of the
%optimization's area. Apart from some overhead
%induced~\footnote{Extra cost paid for each iteration's gathering
%of statistics and for computation of the demand mapping
%error~\cite{SLOSB-DSMSISD-10}.} we claim (see \ref{positioning})
%that the latter approach (i.e demand inference) -explored here-
%seems preferable, at least when dealing with the \textit{1-median}
%problem; in other words, it can exploit complex networks' analysis
%information for driving us faster to the optimal/suboptimal
%location. At the same time it yields a non-dependable on initial
%location (i.e. service generator) solution and, thus, guarantees
%fairness among various users.

\section{Weighted Conditional Betweenness Centrality}
\label{sec:metrics}

%\textbf{Main points:} \\
%a) Present the generic (weighted metric), starting from the known
%BC and going through the (unweighted) CBC, and present the heuristics behind it \\
%
%\textbf{Tentative to-do list:} \\
%a) Could elaborate a bit more on the metric -- argue that it
%combines the influence of two things: topology and demand
%distribution \\
%    i) maybe show its value on simple graphs (ring, grid), to make clear how it
%    can assign different importance to network nodes?  \\
%
%NOTE : in that case we would plot the weighted BC (rather than
%CBC, which changes for each ref dst node). Should we present the
%metrics as BC -> wBC -> CWBC? \\
%b) cut text --make metric presentation more compact \\
%
%The solution proposed herein relies heavily upon the criterion
%based on which the subgraph is picked...
%, around the current node having the content, is picked in order to solve (there) the small-scale optimization problem....
% This criterion (referred to as $CBC$ criterion hereafter) involves an innovative, to the best of our knowledge, metric coming from social networks analysis, that captures network traffic towards a specific node (the node, each time, hosting the content).

\emph{Central} to our \emph{distributed} approach is the Weighted
Conditional Betweenness Centrality ($wCBC$) metric. It originates
from the well-known betweenness centrality metric and captures
both topological and service demand information for each node.
%It is used in
%each local execution of our algorithm (see Section
%\ref{sec:analysis}) for two purposes: a) it identifies the subset
%of nodes with the highest potential to host the service, as viewed
%from the current service location; b) it directly maps the demands
%of the remaining nodes on the selected subset and computes the
%effective demand values appearing on the right-hand side of
%Eq.~(\ref{eqn:1-median}). The metric neatly captures both
%topological and user demand information about each node.

% RELEVANT TO DISCUSSION SECTION

%Centrality measures (or indices) are widely used in social network
%analysis... (what are the asumptions made when using metrics
%involving shortest paths? - Borgatti's work on information
%flow~\cite{Borgatti_05})
%
%Centrality measures have been widely used in Complex Networks
%Analysis, as graph-theoretic tools to shed some light on various
%(\eg social) phenomena, since Freeman's late '70s influential
%articles~\cite{citeulike:392801},~\cite{citeulike:278955}. These
%measures, defined either on the nodes or edges of a graph, are
%usually based on geodesic paths that link members of the network
%and aim to provide a notion of importance of one's
%position~\footnote{under the assumption that importance is equally
%divided among all shortest paths of each pair.}. Different
%measures have been introduced to capture a variation of a node's
%importance, like the ability to reach numerous nodes via relative
%short paths or the popularity among
%others~\cite{citeulike:278955}.

\subsection{Capturing network topology: from BC to CBC}
\label{BC}

Betweenness centrality, one of the most frequently used metrics in
CNA, reflects to what extent a node lies on the shortest paths
linking other nodes. Let $\sigma_{st}$ denote the number of
shortest paths between any two nodes $s$ and $t$ in a connected
graph $G=(V,E)$. If $\sigma_{st}(u)$ is the number of shortest
paths passing through the node  $u \in$\textit{V}, then the
\textit{betweenness centrality} index of node \textit{u} is given
by (\ref{eqn:bc}).

\begin{equation}\label{eqn:bc}
%betweenness centrality
BC(u)=\sum_{s=1}^{|V|}\sum_{t=1}^{s-1}
\frac{\sigma_{st}(u)}{\sigma _{st}}
\end{equation}
$BC(u)$ captures the ability of a node $u$ to control or assist
the establishment of paths between pairs of nodes. It is an
average value estimated over all network pairs.
%\subsection{Conditional Betweenness Centrality}\label{CBC}
In earlier work ~\cite{Pantaz2010}, we proposed Conditional BC
(CBC),
%is -to the best of our
%knowledge- Complex Networks Analysis-inspired metric was proposed
%in~\cite{Pantaz2010} for the efficient capturing of the traffic
as a way to capture the topological centrality of a random network
node with respect to a specific node $t$, which in our context is
a node visited by the service on its way towards the optimal
location. It is defined as
%in Eq.(\ref{eqn:cbc})
%Which information, with respect to \ref{uni}, did we try to 'capture'?
\begin{equation}\label{eqn:cbc}
 %conditional betweenness centrality
CBC(u;t)=\sum_{s \in V,u \neq t
}\frac{\sigma_{st}(u)}{\sigma_{st}}
\end{equation}
with $\sigma_{st}(s)=0$.

Note that the summation is over all node pairs ($x,t$) $\forall x
\in V$ destined at node $t$ rather than all possible pairs, as in
(\ref{eqn:bc}). Effectively, CBC assesses to what extent a node
$u$ acts as a \emph{shortest path aggregator} towards the current
service location $t$ by enumerating the shortest paths to $t$
involving $u$ from all other network nodes.
%flow towards a specific node $t$, where some content (or, in our
%case, the service) is currently hosted. This flow -that shapes the
%resulting cost of service provisioning from that node- is the one
%between all node pairs ($x,t$) $\forall x \in V$ for the
%\textit{fixed} node $t$, and not all possible pairs, as expressed
%in (\ref{eq1}). Consequently, if we were to select a subgraph of

%nodes to solve a small-scale optimization problem, it would make
%sense to include the nodes that stand between the ``most'' paths
%linking the network nodes to the specific one hosting the service,
%provided that the all nodes are characterized by equal demand
%load; the presence of such nodes would reflect somewhat the fact
%that relatively high demand (that shapes the resulting cost) is
%coming through such nodes and so they can be used as demand load
%aggregators. The \textit{conditional betweenness centrality} (or
%$CBC$) index defined by eq.(\ref{eq2}), can be used as a criterion
%for constructing the network subgraph (assuming
%$\sigma_{st}(s)=0$):

%The key claim of how to exploit the -in above sense- critical
%position of some nodes in a way that they can be regarded as
%demand load aggregators, obviously needs to be extended to cover
%the non-uniform demand case. The one-to-one relation between the
%number (or density) of nodes lying in some area and the emerging
%demand load, from that very area, does not hold; in general, a
%great number of shortest paths passing-through (the node $u$) does
%not necessarily mean that equally great demand load stems from the
%``source-area'' of those paths.

\subsection{Capturing service demand: from CBC to wCBC}
% As earlier explained the demand load is a major factor in identifying the content's optimal location.,
In general, a high number of shortest paths through the node $u$
does not necessarily mean that equally high demand load stems from
the sources of those paths. Naturally, we need to enhance the pure
topology-aware $CBC$ metric in a way that it
%would no more be oblivious about%
takes into account the service demand that will be eventually
served by the shortest paths routes towards the service location.
To this end, we introduce \textit{weighted conditional betweenness
centrality} ($wCBC$), where the shortest path ratio of
$\sigma_{st}(u)$ to $\sigma _{st}$
%summed over each node $s$ other than $t$ as described%
in Eq.~(\ref{eqn:cbc}), is modulated by the demand load generated
by each node $s$.
\begin{equation}\label{eqn:wCBC}
%betweenness centrality
wCBC(u;t)=\sum_{s \in V,u \neq t } w(s) \cdot
\frac{\sigma_{st}(u)}{\sigma _{st}}.
%C(u)=\sum_{s=1}^{|V|}\sum_{t=1}^{s-1} \frac{\sigma_{st}(u)}{\sigma _{st}}
\end{equation}
Note that $ \sigma_{ut}(u) = \sigma _{ut}$; hence, for each
network node $u$, its $wCBC(u;t)$ value is lower bounded by its
own demand $w(u)$.
%where the equality $ \sigma_{ut}(u) = \sigma _{ut}$
%results in
%assigning an equal to $u$'s demand $wCBC$ value (otherwise equal
%to zero), not only to those being intermediate between $s$ and
%$t$, but also to any one-hop neighbors of the host node $t$.

Therefore, $wCBC$ assesses to what extent a node can serve as
\emph{demand load concentrator} towards a given service location.
It is straightforward to see that when a service is equally
requested by all nodes in the network (uniform demand) the $wCBC$
metric degenerates to the CBC one, within a scale constant.

\subsection{Metric computation for regular network topologies}

Closed-form expressions for $wCBC$ are not easy to obtain except
for scenarios with uniform demand and regular topologies. The
following two Propositions provide the closed-form expressions for
CBC, \ie $wCBC$ for $w_n=1,~\forall n \in V$, in two instances of
regular network topologies, the ring and the two-dimensional (2D)
grid.

\begin{proposition}\label{prop_ring}
In a ring network of $N$ nodes, the CBC value of a node $u$ with
respect to another node $t$ are given by:
%\begin{equation} \label{eqn:cbc_ring}
 \[ CBC_{ring(N)}(u;t) = \begin{cases}
\lceil\frac{N-1}{2}-d(u,t))\rceil^+~~~N = 2k \\
 \lceil\frac{N+1}{2}-d(u,t))\rceil^+~~~N=2k+1, ~k \in Z
\end{cases}\]
%\end{equation}
where $\lceil x\rceil^+ = max(x,0)$ and $d(u,t)$ is the minimum
hop count distance between nodes $u$ and $t$ along the ring.
\end{proposition}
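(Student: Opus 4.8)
The plan is to reduce $CBC(u;t)$ to a counting problem on the cycle and settle it by the geodesic-additivity criterion. Because every edge has unit weight, node $u$ enters the $s$-term of~(\ref{eqn:cbc}) only through the ratio $\sigma_{st}(u)/\sigma_{st}$, and $u$ lies on a shortest $s$-$t$ path exactly when $d(s,u)+d(u,t)=d(s,t)$. Using the vertex-transitivity of the ring I would fix $t$, label the nodes $0,1,\dots,N-1$ with $t=0$, and put $u$ at position $d:=d(u,t)\in\{1,\dots,\lfloor N/2\rfloor\}$. On a cycle a geodesic from $s$ to $t$ follows a shortest arc, so $u$ sits on such an arc iff $s$ lies on the arc leaving $u$ in the direction away from $t$ and the through-$u$ length stays minimal; for $s$ at position $d+j$ this is the single inequality $d+j\le N/2$. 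This pins down the contributing sources as $j=0,1,\dots,\lfloor N/2\rfloor-d$ and turns the whole statement into counting them with the right weights.

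I would then dispatch the odd case $N=2k+1$ first, where any two distinct nodes are joined by a \emph{unique} shortest path, so $\sigma_{st}=1$ and each ratio is the indicator of ``$u$ on the $s$-$t$ geodesic''. The contributing sources sit at positions $d,d+1,\dots,k$ (the term $s=u$ included, since $\sigma_{ut}(u)=\sigma_{ut}$ gives it weight $1$), which is exactly $k-d+1=\tfrac{N+1}{2}-d$ of them. As $d\le k$ this quantity is positive, so the positive-part operator $\lceil\cdot\rceil^+=\max(\cdot,0)$ is inactive and the claimed expression follows.

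The even case $N=2k$ carries the genuine subtlety and is where I expect the main obstacle. The same arc count again selects the sources at positions $d,\dots,k$, but the \emph{single} source antipodal to $t$ (position $k$, at distance exactly $N/2$) is now joined to $t$ by \emph{two} equal shortest arcs, only one of which threads $u$; it therefore contributes $\sigma_{k,0}(u)/\sigma_{k,0}=\tfrac12$ instead of $1$, while the remaining sources each contribute $1$. The delicate point is thus the parity-sensitive half-unit produced by this two-geodesic source: it makes the raw tally a half-integer and is precisely what separates the two branches of the closed form. The bulk of the care in the write-up would go into handling this antipodal term, reconciling the resulting half-integer with the reported integer expression, checking the boundary $d=\lfloor N/2\rfloor$ where the count degenerates, and confirming that $\lceil\cdot\rceil^+$ only ever clamps values lying outside the admissible range $d\le\lfloor N/2\rfloor$; the remaining algebra is routine.
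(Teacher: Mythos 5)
Your reduction and your odd case are sound, and they follow essentially the same route as the paper's own (very terse) proof: exploit the ring symmetry, characterize the sources whose geodesics to $t$ traverse $u$ as those lying on the arc behind $u$ at positions $d, d+1, \dots, \lfloor N/2\rfloor$, and count them. For $N=2k+1$ your count $k-d+1=\tfrac{N+1}{2}-d(u,t)$, which includes the self-term $s=u$ with weight $1$ via $\sigma_{ut}(u)=\sigma_{ut}$, is exactly the stated expression, and the positive-part operator is indeed inactive there.

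The genuine gap is in the even case, and it is not one that ``care in the write-up'' can close. Under the very conventions you adopt (self-term $s=u$ contributes $1$; the antipodal source at distance $N/2$ contributes $\sigma_{st}(u)/\sigma_{st}=\tfrac12$), the tally for $N=2k$ and $1\le d<k$ is $1+(k-1-d)+\tfrac12=\tfrac{N+1}{2}-d$, whereas the proposition claims $\tfrac{N-1}{2}-d$. These differ by exactly $1$, and since $\lceil x\rceil^+$ is defined in the statement as $\max(x,0)$ --- a positive part, not a ceiling --- there is no rounding step available to absorb the discrepancy; note also that the claimed even-$N$ expression is itself a half-integer, so the ``integer expression'' you propose to reconcile with does not exist. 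The claimed branch is recovered only if one drops the $s=u$ term, i.e.\ adopts the competing convention $\sigma_{st}(s)=0$ stated below Eq.~(\ref{eqn:cbc}); but that choice turns the odd-$N$ count into $\tfrac{N-1}{2}-d$ and breaks the branch you have already proved. In other words, the two branches of the proposition are consistent only under different treatments of the self-term, and your plan, which rightly fixes a single convention, will establish one branch while refuting the other. You should say this explicitly (and also treat the degenerate point $d=N/2$, where $u$ is itself the antipode and your count gives $1$), rather than deferring to a reconciliation step that would fail. The paper's proof glosses over precisely this arithmetic by asserting that the count ``decreases by one for each additional hop'' and that ``summing over the respective half of the ring yields the result.''
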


\begin{proposition}\label{prop_grid}
Consider a $M$x$N$ rectangular grid network, where nodes are
indexed inline with their position in the grid, \ie node $(i,j)$
is the node located at the $i^{th}$ row and $j^{th}$ column of the
grid. The CBC value of node $u$ at position $(a,b)$ with respect
to node $t$ at position $(k,l)$ is given by (\ref{eqn:cbc_grid}).
%\begin{multicols}{1}
%\footnotesize

\begin{figure*}[ht]
%\label{wons}
\begin{minipage}[b]{.55\linewidth}
\centering
\framebox(205,180){\includegraphics[width=6.3cm]{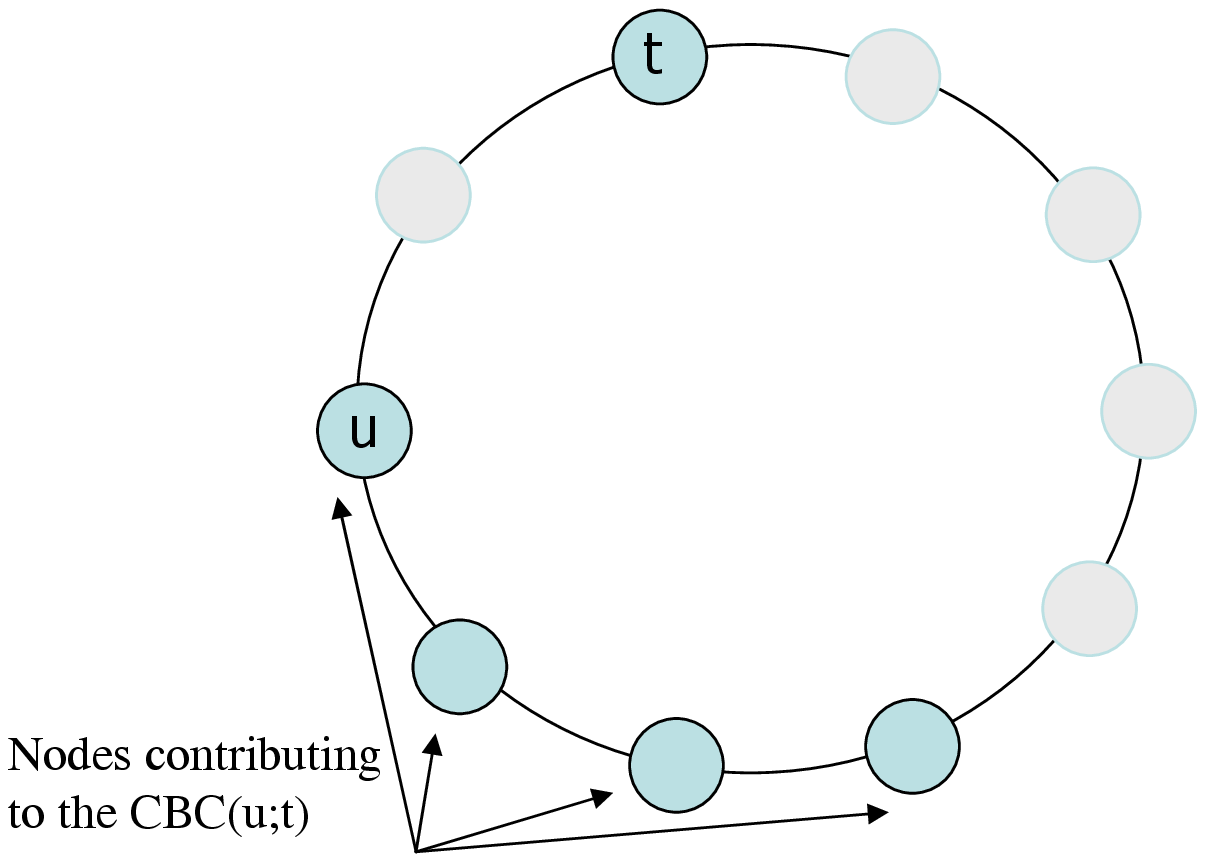}}
%\caption[short]{}
\end{minipage}
\hspace{0.35cm}
\begin{minipage}[b]{0.4\linewidth} \centering
\framebox(205,180){\includegraphics[width=4.5cm, bb = 23 95 358 416]{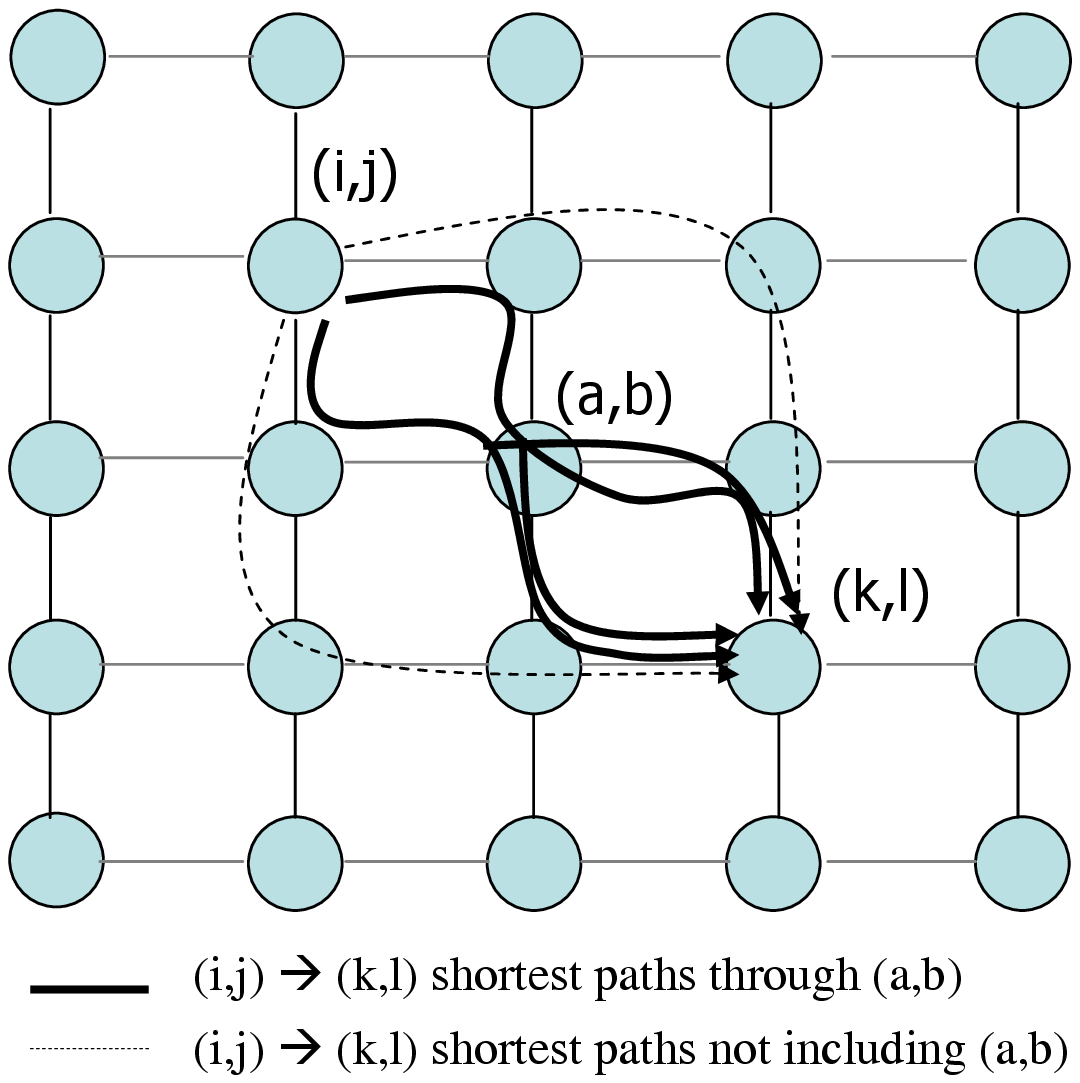}}
%\caption[short]{}
\end{minipage}
\caption{Conditional Betweenness Centrality in regular
topologies.} \label{cbc_plots}
\end{figure*}

\begin{figure*}[tb]
\begin{equation} \label{eqn:cbc_grid}
CBC_{grid(M,N)}(u;t) = \sum_{i=1}^{M} \sum_{j=1}^{N}
\frac{\binom{|b-j|+|a-i|}{|a-i|}
\binom{|l-b|+|k-a|}{|k-a|}}{\binom{|l-j|+|k-i|}{|k-i|}}1{\hskip
-2.5 pt}\hbox{I}_{\{|l-j|+|k-i| == |b-j|+|a-i|+|l-b|+|k-a|\}}
\end{equation}
%\footnotesize
\end{figure*}
\end{proposition}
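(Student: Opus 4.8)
The plan is to evaluate the defining sum (\ref{eqn:cbc}) for $CBC(u;t)$ term by term over all candidate sources $s=(i,j)$, and to show that the summand $\sigma_{st}(u)/\sigma_{st}$ equals the displayed ratio exactly when $u=(a,b)$ lies on a shortest $s$-$t$ path, and vanishes otherwise. Throughout I identify each grid node with its coordinate pair and use that, since all links have unit weight, the distance between $(i,j)$ and $(k,l)$ is the Manhattan distance $d\big((i,j),(k,l)\big)=|k-i|+|l-j|$. The whole argument then reduces to two classical lattice-path facts, one for the denominator and one for the numerator of (\ref{eqn:cbc_grid}).

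First I would count shortest paths. I would argue that every shortest $s$-$t$ path is \emph{monotone}: each unit step must decrease the Manhattan distance to $t$, since any ``wrong-direction'' step forces a compensating step and overshoots the minimum length; conversely, every monotone path has length exactly $|k-i|+|l-j|$ and, because the rectangular grid is convex, never leaves $[1,M]\times[1,N]$. A shortest path is thus any interleaving of the $|k-i|$ required row-moves with the $|l-j|$ required column-moves, so $\sigma_{(i,j),(k,l)}=\binom{|k-i|+|l-j|}{|k-i|}$, which is precisely the denominator of (\ref{eqn:cbc_grid}). For this count I may, by the reflection symmetries of the grid (which preserve adjacency, distances, and path counts), assume $k\ge i$ and $l\ge j$ without loss of generality.

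Next I would characterise when $u$ contributes. A standard shortest-path decomposition shows $u$ lies on some shortest $s$-$t$ path iff $d(s,u)+d(u,t)=d(s,t)$; writing this out and applying the triangle inequality for absolute values coordinate-wise, equality holds iff $u$ sits in the bounding rectangle of $s$ and $t$, which is exactly the condition $|l-j|+|k-i|=|b-j|+|a-i|+|l-b|+|k-a|$ carried by the indicator $\mathbb{I}_{\{|l-j|+|k-i|=|b-j|+|a-i|+|l-b|+|k-a|\}}$. On this support the signs of $a-i$ and $k-a$ (resp. $b-j$ and $l-b$) agree, so the sub-boxes $[\min(i,a),\max(i,a)]\times[\min(j,b),\max(j,b)]$ and its counterpart for $(u,t)$ meet only at $u$; hence every shortest $s$-$t$ path through $u$ splits uniquely into a shortest $s$-$u$ path followed by a shortest $u$-$t$ path, and any such pair concatenates back to a monotone (so shortest) $s$-$t$ path. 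This gives $\sigma_{(i,j),(k,l)}\big((a,b)\big)=\binom{|a-i|+|b-j|}{|a-i|}\binom{|k-a|+|l-b|}{|k-a|}$, the numerator of (\ref{eqn:cbc_grid}); when the indicator condition fails the count is $0$. Dividing by $\sigma_{(i,j),(k,l)}$ and summing over all $(i,j)$ yields the claim; as a sanity check, the source $s=u$ gives ratio $1$ and the source $s=t$ is killed by the indicator for $u\neq t$, consistent with the conventions below (\ref{eqn:cbc}).

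I expect the main obstacle to be the bookkeeping in the third step rather than the counting. One must verify that the per-coordinate triangle-inequality equalities combine into the single scalar indicator and, crucially, that the sign conditions for $s$, $u$, and $t$ are mutually consistent so that the two monotone halves concatenate into a genuinely monotone (hence simple and shortest) path. The ``without loss of generality'' reduction to the up-right quadrant must be handled with care, since $s$, $u$, and $t$ may sit in arbitrary relative positions; working directly with the absolute values, as the statement does, is what tames the case explosion, once one observes that the indicator already forces the favourable sign alignment that makes the factorization $\sigma_{st}(u)=\sigma_{su}\,\sigma_{ut}$ valid.
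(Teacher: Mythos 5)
Your proposal is correct and follows essentially the same route as the paper's own (much terser) proof: count shortest grid paths as monotone lattice paths to get the binomial denominator, use the $d(s,u)+d(u,t)=d(s,t)$ criterion (the indicator) to decide when $u$ contributes, and factor $\sigma_{st}(u)=\sigma_{su}\sigma_{ut}$ to get the numerator. The paper merely asserts these three facts in a few lines, whereas you supply the justifications (monotonicity, convexity of the rectangle, unique decomposition/concatenation of path halves), so your write-up is a faithful, more rigorous elaboration of the same argument.
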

\begin{proof}
The proof of the first proposition is straightforward. There is
one minimum hop count path between all pairs of nodes in the ring.
The only exception concerns nodes $N/2$ positions away the one
from another in rings with even number of nodes, where there are
two shortest paths. For given destination node $t$, the $CBC(u,t)$
value is only increased by those shortest paths that encompass the
intermediate node $u$. Due to the ring symmetry, their number only
depends on the distances between nodes $u$ and $t$ and decreases
by one for each additional hop away from $t$. Summing them over
the respective half of the ring, yields the result.

For the 2D grid, the problem degenerates into the enumeration of
shortest paths between two grid nodes. The denominator of
(\ref{eqn:cbc_grid}) expresses the number of shortest paths
between two arbitrary nodes $(row, column)$ coordinates $(i,j)$
and $(k,l)$, whereas the  numerator of (\ref{eqn:cbc_grid}) equals
the number of those paths going through a node with coordinates
$(a,b)$. We then sum the ratios over all grid nodes with shortest
paths to node $t=(k,l)$ encompassing node $u=(a,b)$.
\end{proof}

%To include in here the ``dig more into metric'' tasks 1 \& 2 of
%the working plan, i.e. upper bounded metric, monotonically
%increasing on single shortest path route to target, correlation of
%CBC values in known graphs(?), any other? What is the
%configuration of the top CBC nodes around the target? Is there any
%locality attribute- are we really interested in it?

\section{The cDSM Algorithm description}
\label{sec:analysis}
%In this section we present the CNA-driven
%approach to the content placement problem. Our solution's
%pseudocode is depicted in the Algorithm~\ref{alg1}, where the cost
%of placing the content at node $k$ of graph $G(V,E)$ that
%represents the entire topology, is denoted by $C(k)$. The  $i$-th
%iteration's subgraph around the host-node $n$, where the
%small-scale optimization takes place, is denoted by $G_{n}^{i}$.
%Finally, we use the variables $C_{current}$ and $C_{next}$ to hold
%the cost of current and next step solution, respectively.

In this section we present our CNA-driven Disributed Service Migration Algorithm for the
service placement process. The service migration to the optimal
location in the network evolves within a finite number of
iterations, as we show later in Section \ref{subsec:convergence},
through a path that continuously decreases the aggregate cost of
service access over all network nodes.

%Our solution's pseudocode is depicted in the Algorithm~\ref{alg1},
%where the cost of placing the content at node $k$ of graph
%$G(V,E)$ that represents the entire topology, is denoted by
%$C(k)$. The  $i$-th iteration's subgraph around the host-node $n$,
%where the small-scale optimization takes place, is denoted by
%$G_{n}^{i}$. Finally, we use the variables $C_{current}$ and
%$C_{next}$ to hold the cost of current and next step solution,
%respectively.

\subsection{Detailed algorithm description}
A single algorithm iteration involves a number of discrete steps.
We discuss them below while providing pointers to the algorithm's pseudocode.

\emph{Step 1: Initialization}. The algorithm execution starts at
the node $s$ that initially deploys the service. The cost of the
service placement at node $s$ is assigned an infinite value
$(line~3)$ to secure the first algorithm iteration $(line~11)$.
This step is only relevant to the first algorithm iteration.
%In subsequent iterations, the new reference node is the node
%currently hosting the service.

\emph{Step 2: Metric computation and 1-median subgraph
derivation}. Next, the computation\footnote{For our simulation's
needs, this involves solving the all-pairs shortest path problem.
Common algorithms, like Floyd-Warshall~\cite{citeulike:201727},
may need even $\Theta{(|V|^{3})}$ time to solve, on a $G(V,E)$
graph. Hence, for $weigthedCBC$ computation we properly modified a
scalable algorithm~\cite{Brandes01afaster} for \textit{betweenness
centrality}, with runtime $\mathcal{O}(|V||E|)$. The cost
introduced is low, as the length and number of all shortest paths
from a given source to every other node, needed for our
computation, is determined in $\mathcal{O}$\textit{(|E|)}
time~\cite{Brandes01afaster}.} of $wCBC(u;s)$ metric takes place
for every node $u$ in the network graph $G(V,E)$. Nodes featuring
the top $\alpha\%~wCBC$ values together with the node $Host$
currently hosting the service form the subgraph $G_{Host}^{i}$
($i$ enumerates the algorithm iterations), over which the 1-median
problem will be solved $(lines~4-5~and~14-15)$. Clearly, the size
of this subgraph and the complexity in the problem solution are
directly affected by the choice of the parameter $\alpha$. We show
in Section \ref{sec:evaluation}, that even with very small
$\alpha$ values, our algorithm yields solutions very close to the
optimal.

\emph{Step 3: Mapping the demand of the remaining nodes on the
subgraph}. In this step, the service demand from nodes in
$G\setminus G_{Host}^{i}$ is mapped to the nodes of the
$G_{Host}^{i}$ subgraph that explicitly participate in the
$1$-median problem solution. How this is done is described in
detail in section \ref{subsec:mapping}. For the moment, it
suffices to say that the demand factors $w(n)$ in Eq.
(\ref{eqn:1-median}) are \emph{effective demands},
$w_{eff}(n;Host)$, dependent on the current service host. They
include not only the demands of the nodes selected in the previous
step due to their high $wCBC$ values but also the demands of the
remaining nodes that are not directly considered in the 1-median
problem formulation.

\emph{Step 4: 1-median problem solution and service migration to
the new host node}. Any centralized technique may be used to solve
this small-scale optimization problem. Successively better
algorithms have been designed during the last few
years~\cite{alg_k_Med} and one can seek for the best heuristic
method available to maximize scalability.
%\textbf{For simplicity
%reasons we have solved it using
%enumeration~\cite{mirchandani1990}, even if there are other more
%cost-effective methods. Besides, this implementation choice
%affects neither the degree nor the speed of convergence of our
%solution to the optimal one.}
The optimization's outcome is the
location of the candidate new \textit{Host} node, which results in
minimum service access cost $C(Host)$ \footnote{In case of
multiple minimum-cost solutions within the $G^{i}$ nodes, we
choose randomly one of them.} among the nodes of the current
subgraph. We assign the value of this cost to the variable
$C_{next}$ and test whether it is smaller than $C_{current}$.
%In case it is true, the
%content is moved to node $Host$ and steps 1-3 are repeated. 
%\footnote{The percentage $\alpha$ of total number of nodes, that
%forms the $G^{i}$ subgraph, remains the same until our algorithm's
%completion.}.
As long as the condition for cost decrease holds, the service is
being relocated to this node, the algorithm iterates again through
steps 2-4, and the service continues its progress towards the
(globally) lowest-cost location.
%In the next paragraph, we prove
%that this service migration path is finite.

% 
\begin{algorithm}[ht]
\caption{cDSMA in \textit{G(V,E)}}
\label{alg1}
\begin{algorithmic}[1]
\STATE $choose \ randomly \ node \ s $ 
\STATE $place \ SERVICE\ @ \ s\ $ 
\STATE $C_{current} \leftarrow \infty $
\STATE $  \textbf{for all} \ u \in G \  \textbf{do} \ compute \  wCBC(u;s)$ 
\STATE $G_{s}^{o} \leftarrow \{\alpha \% \ of \ G \ with \ top \ wCBC\ values\} \cup \{ s\}$ 
\STATE $ \textbf{for all} \ u \in G_{s}^{o} \ \textbf{do} $
\STATE $ \ \ \  compute \ w_{map}(u;s)$
\STATE $  \ \ \ w_{eff}(u;s) \leftarrow w_{map}(u;s) +w(u)$ 
\STATE $ Host \ \leftarrow  \textit{1-median} \ solution \ in \ G_{s}^{o}   $ 
\STATE $C_{next} \leftarrow C(Host), ~ i \leftarrow 1 $
\WHILE{$C_{next} < C_{current}$} 
\STATE $move \ SERVICE \ to \ Host $ 
\STATE $C_{current} \leftarrow C_{next} $ 
\STATE $ \textbf{for all } \ u \in G \  \textbf{do} \ compute \ wCBC(u;Host) $ 
\STATE $G_{Host}^{i} \leftarrow \{\alpha \% \ of \ G \ with \ top \ wCBC\ values\} \cup \{ Host\} $ 
\STATE $ \textbf{for all} \ u \in G_{Host}^{i} \  \textbf{do}  $ 
\STATE $ \ \ \ compute \ w_{map}(u;Host)$
\STATE $ \ \ \  w_{eff}(u;Host) \leftarrow w_{map}(u;Host) +w(u)$ 
\STATE $NewHost   \leftarrow \textit{1-median} \ solution \ in \ G_{Host}^{i} \  $
\STATE $Host \leftarrow  NewHost $ 
\STATE $C_{next} \leftarrow C(NewHost),~i \leftarrow i+1 $ 
\ENDWHILE
\end{algorithmic}
\end{algorithm}

\subsection{On the convergence of the proposed
algorithm}\label{subsec:convergence}

In this paragraph we study the convergence of cDSMA, showing that the migration
process terminates after a finite number of steps. The following lemma serves
as the basis for the proof of the convergence proposition. 
% In this paragraph we demonstrate that the cost-decrease criterion
% is a sufficient condition for the termination of the algorithm in
% a finite number of steps.
%We sketch a proof for its convergence below:
\newtheorem{theorem3}{Lemma}%[section]
\begin{theorem3}
\label{lemma_convergence}
A service facility following the migration process of Algorithm~\ref{alg1} 
will visit at most one network node twice. 
\end{theorem3}
\begin{proof}
Assume that the service, initially deployed at some node $n \in G$ 
reaches the node $b \in G$ twice. Right after its first placement
at $b$ upon iteration, say, $i-1$ we solve the 1-median
in the subgraph $G_{b}^{i}$ that is formed by the nodes with the
top $\alpha \%~wCBC(u;b)$ values. Let the corresponding cost be
$C_{b}^{i}$. When the service returns to $b$ at iteration, say,
$j$ given that the network topology remains the same, the
deterministic $wCBC$ criterion of (\ref{eqn:wCBC}) singles out the
same subgraph with the one of the first visit, so we have that
$G_{b}^{i} = G_{b}^{j}$, implying for the costs that
$C_{b}^{i}= C_{b}^{j}$; the cost-decreasing
condition of cDSMA is then not fulfilled and, thus, 
the service locks at node $n$ and the migration process halts.
\end{proof}

\begin{proposition}\label{prop_convergence}
cDSMA converges at some solution in $O(|V|)$ steps.
\end{proposition}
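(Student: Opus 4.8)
The plan is to establish convergence in $O(|V|)$ steps by combining the monotone cost-decrease property of cDSMA with the structural constraint supplied by Lemma~\ref{lemma_convergence}. The algorithm's \textsc{while} loop (lines~11--22) only relocates the service when $C_{next} < C_{current}$, so the sequence of costs $C_{current}$ is strictly decreasing across successive accepted migrations. Since each migration moves the service to a concrete network node, the key observation is that the sequence of \emph{visited host nodes} cannot contain any node more than the number of times permitted by the lemma. I would first make this monotonicity explicit and then argue that it forces the trajectory to be essentially a simple path through $V$.

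First I would formalize the termination condition: the loop halts precisely when the 1-median solution over the current subgraph $G_{Host}^{i}$ fails to yield a cost strictly below $C_{current}$. Next I would invoke Lemma~\ref{lemma_convergence}, which guarantees that at most one node is ever visited twice; equivalently, the trajectory visits each node of $V$ at most once, with a single possible exception. Combining this with the strict cost decrease, the number of accepted migration steps is bounded by the number of distinct nodes the service can occupy, which is at most $|V|$ (plus the one repeated visit that immediately triggers the halting condition). Hence the total number of iterations of the \textsc{while} loop is $O(|V|)$, and each iteration terminates because it solves a finite small-scale 1-median problem over a deterministically fixed subgraph.

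The main obstacle I anticipate is reconciling the lemma's statement with the strict-decrease mechanism cleanly. The lemma's proof shows that if the service ever returns to a previously occupied node $b$, the recomputed subgraph $G_{b}^{j}$ coincides with $G_{b}^{i}$ (by determinism of $wCBC$ under a fixed topology), so the recomputed cost equals the earlier cost and the cost-decrease test fails, halting the process. I would therefore need to argue that this ``second visit'' can occur at most once and that it is exactly the event that terminates the loop, rather than something that could recur. The subtle point is that strict cost decrease already precludes revisiting \emph{any} node at a strictly lower cost, so a revisit must be cost-neutral and immediately fatal to the loop; the lemma packages this into the ``at most one node twice'' bound. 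With that in hand, each of the at most $|V|$ distinct nodes accounts for one migration step, yielding the $O(|V|)$ bound.

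Finally I would note that this is a worst-case bound on the number of steps and does not itself certify that the located solution is a global optimum---only that the process converges to \emph{some} fixed point of the local search. The proposition claims convergence ``at some solution,'' consistent with this reading. I would close by remarking that the empirical evaluation in Section~\ref{sec:evaluation} demonstrates far faster convergence in practice, so the $O(|V|)$ figure should be understood as a loose but rigorous upper bound established purely from the monotonicity and the single-revisit structural lemma.
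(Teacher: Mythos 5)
Your proposal is correct and follows essentially the same route as the paper's own proof: invoke Lemma~\ref{lemma_convergence} to conclude that the migrating service visits each node at most once with at most a single node visited twice, and combine this with the finiteness of $V$ to bound the number of migration steps by $|V|+1 = O(|V|)$. Your additional unpacking of the strict cost-decrease mechanism and of why a revisit is immediately terminal is a more explicit rendering of what the lemma already encapsulates, not a different argument.
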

\begin{proof}
As stated above, the solution derived from cDSMA is either
the globally optimal (best case) or one locally anticipated as
lowest-cost solution. Since the number of network nodes is finite, 
the migrating service will -according to \textit{Lemma~\ref{lemma_convergence}}- visit at most
every node once and only one of them, twice. This takes $O(|V|+1)=O(|V|)$ steps.    
\end{proof}

\subsection{Mapping the demand of remaining nodes}\label{subsec:mapping}
%The $wCBC$ metric naturally captures how the demand load for the
%service is routed/concentrated through the network topology all
%the way towards the service host. Therefore, it lets us select the
%subset of the ``most significant'', in this respect, nodes for
%solving the \textit{1-median} optimization problem within the
%induced 1-median subgraphs $G_{Host}^{i}$.

Besides being the basis for extracting the 1-median subgraph
$G_{Host}^{i}$ in each algorithm iteration, the $wCBC$ metric also
eases the mapping of the demand that the rest of the network nodes
in $G\backslash G_{Host}^{i}$ induce on the 1-median subgraph.
This demand must be taken into account when solving the 1-median
problem. We do this by modulating the original $wCBC$ metric in
accordance with two observations.

%A final step, before solving the optimization problem in the
%subgraph $G^{*(i)}$ of the top a\% \textit{wCBC} valued nodes for
%the $i$-th iteration, has to do with correctly
%mapping\footnote{Our simulation studies indicated that some extra
%piece of information is needed in order to somehow discriminate
%between the amount of demand that two of the already selected -by
%the proposed metric- nodes may possibly receive. This information
%is effectively provided by the mapping mechanism.} the demand of
%the remaining nodes on them. To this end, we employ the idea of
%tagging each selected node $v$ ($v \in G^{*(i)}$) with a weight
%that would reflect what portion of the demand generated by the
%``world outside" the subgraph (i.e., the non-shaded portion of $G$
%in Fig.~\ref{schematic}), passes through $v$.

Firstly, during the computation of the node $wCBC$ values, the
demand of a node $z$ in $G\backslash G_{Host}^{i}$ is taken into
account in all the $G_{Host}^{i}$ nodes that lie on the shortest
path(s) of $z$ towards the service host node $t$. Simply mapping
the demand of $z$ on all those nodes inline with the original
$wCBC$ metric, has two shortcomings: (a) when the demand of
heavy-hitter nodes is distributed among multiple nodes, any strong
direction(gradient) of heavy demand that would otherwise ``pull''
the service towards a certain direction, tends to fade out; (b)
the cumulative demand that is mapped on all $G_{Host}^{i}$ nodes
ends up exceeding considerably the real demand a node poses for
the service. For example, in Fig.~\ref{fig:schematic} let w(16)=
$\delta$; then naive reuse of the $wCBC$ values for service demand
mapping would result in nodes $11$, $8$ and $12$ receiving 100\%,
50\% and 50\% of the original $\delta$ demand, respectively.
%Thus, not only is the demand load of $s$ being mapped on many of
%the $G^{*(i)}$ nodes (smoothing out, throughout the $G^{*(i)}$
%nodes, any in-coming path ``carrying'' heavy demand), but also it
%results in the $G^{*(i)}$ nodes being cummulatively assessed
%greater demand than the one actually generated by $s$
%\footnote{Let -for instance- w(16)=$\delta$ in
%Fig.~\ref{schematic}; then the $wCBC$-based mapping would result
%in nodes 11, 8 and 14 receiving 100\%, 50\% and 50\% of the
%original $\delta$ demand, respectively.}.
Hence, to achieve accurate mapping, the influence of $z$ should be
``credited'' only to the first $G_{Host}^{i}$ node encountered on
each shortest path from $z$ towards the service host. The set of
all these \emph{entry} nodes $v$ with this property forms a
subgraph of $G_{Host}^{i}$.
% denoted as $G_{Host,b}^{i}$.%
%and lying
%on the $j$-th shorthest path between $z$ and the service host at
%the location where the distance from $z$, $d^{j}_{zt} (z,v)$, is
%minimum.

Secondly, it happens frequently that the shortest paths
originating from the 1-median subgraph nodes include further
subgraph nodes. The demand of those nodes have to be subtracted
when computing the effective demand, with which each
$G_{Host}^{i}$ node participates in the solution of the 1-median
problem since they are accounted for directly through the very
same nodes that generates them.
%by the \textit{wCBC} values estimated originally for
%those nodes since they are already included in the optimization
%problem through the very same nodes that generate them.

Mathematically speaking, the weights $w(n)$
%with which the nodes
%in $G_{Host}^{i}$ participate in the solution of the 1-median
%problem
in Eq. (\ref{eqn:1-median}) can be regarded as \emph{effective
demands}
\begin{equation}\label{eqn:demand}
\mbox{\fontsize{10}{9}\selectfont $w_{eff}(n;Host)=
 w(n)+w_{map}(n;Host)$}
%\footnotestyle{New\,Demand(u;Host)= Selected\,CBC(u;Host)+initial\,demand(u)}
\end{equation}
%with which each node $v \in G^{*(i)}$ will take part in the
%optimization problem.
that bring together two terms. The first one is the native demand
for the service coming from users that are served by node $n$. The
second term corresponds to the contribution of the nodes in
$G\backslash G_{Host}^{i}$ (\ie the non-shaded nodes in
Fig.~\ref{fig:schematic}), which is given by:

\begin{eqnarray}\label{eqn:selCBC}
%betweenness centrality
w_{map}(n;t) & = &  \sum_{s \in\{ G\setminus G_{Host}^{i} \} }w(s)
\frac{\sigma'_{st}(n)}{\sigma_{st}} \\
\sigma'_{st}(n) & = &  \sum_{j=1}^{\sigma_{st}} 1{\hskip -2.5
pt}\hbox{I}_{ \{n~\in~SP_{st}(j) \bigcap n = \underset{u \in
SP_{st}(j)}{\operatorname{argmin}} d(s,u) \} } \nonumber
%argmin_{n \in SP_{st}(j) } d(s,n) \} }
% \sum_{j\in [1,\sigma_{st}]}\frac{\sigma_{st}(v|min_{d^{j}_{st}
% (s,v)})}{\sigma _{st}}
%C(u)=\sum_{s=1}^{|V|}\sum_{t=1}^{s-1} \frac{\sigma_{st}(u)}{\sigma _{st}}
\end{eqnarray}
where $SP_{st}(j)$ is the set of shortest paths from node $s$ to
node $t$.

\begin{figure}[ht]
  \centering
%\framebox(180,135){\includegraphics[scale=0.2]{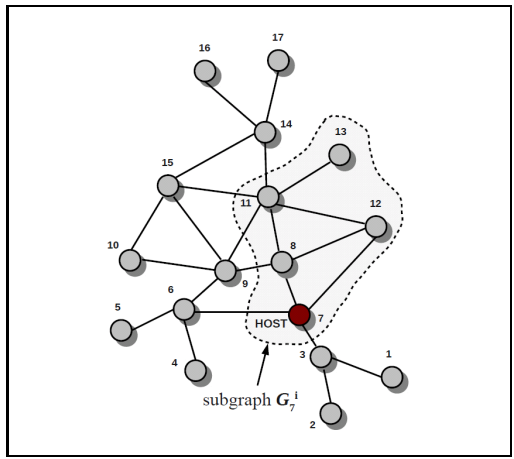}}
\framebox(190,170){\includegraphics[scale=0.22]{draw3}}
 % \hspace{0.5cm}
  \caption{The 1-median subgraph for an example network of 17 nodes with node 7 storing the service facilities in the $i^{th}$ algorithm iteration.
   There are two non-zero demand mapping terms, $w_{map}(8;7)$ and $w_{map}(11;7)$.}
%  \label{schematic}
\label{fig:schematic}
\end{figure}

Back to our example in Fig. \ref{fig:schematic}, the nodes $14$,
$15$, $16$ and $17$ will now contribute to the $w_{map}(11;7)$
value, whereas the included in $G_{7}^{i}$, node $13$, will not.

\section{Evaluation methodology}\label{sec:evaluation}
%We have implemented the algorithm described above and carried out
%a number of simulations both on well-studied synthetic graph
%models and realistic (intra-ISP) topologies to evaluate its
%performance.
It should have become clear by this point that both the $wCBC$
metric and the performance of cDSMA are heavily dependent
on two factors: the network topology and the service demand
distribution within the network. Their combination may enforce or,
on the contrary, suppress strong service demand attractors and
assist (resp. impede) the progress of the service facilities
towards their optimal location in the network. In what follows, we
study the behavior of cDSMA over a broad set of scenarios
that cover efficiently the $\{net~topology, demand~distribution\}$
variation space.

%\subsection{Simulation methodology and settings}
% Set-up description: which synthetic graphs do we use? Number of
% nodes, parameters, etc...
% Why do we employ (those) random graphs? Do they reflect any
% real-world scenario?

\subsubsection{Network topology} We consider both
synthetic and realistic network topologies. The two synthetic
topologies we experiment with are the
Barab{\'{a}}si-Albert~\cite{RefWorks:2503} and two-dimensional
rectangular grid graphs. The two types of graph models bear very
different and distinct structural properties. The B-A graphs form
pure probabilistically and can reproduce a highly skewed node
degree distribution that approximates the power-law shape reported
in literature~\cite{falouts}. Grids, on the other hand, exhibit
strictly regular structure with constant node degree and diameter
that grows exponentially with the number of network nodes.
%illustrating topology's role in identifying the optimal location.
%The E-R graph model and the Grid topology can be considered as the
%two marginal cases as far as the topological structure is
%concerned
%The former results in forming a pure probabilistically generated
%network, whereas the latter is the typical case of a
%``strictly'' regular topology.
The synthetic network topologies let us assess the algorithm and
highlight its behavior under certain extreme yet predictable
operational conditions. Nevertheless, the ultimate assessment of
our algorithm is carried out over real-world ISP network
topologies. The dataset we consider ~\cite{dataset} has been
recently made publicly available~\cite{ISP_data,pam10}. It
includes topology data from 850 distinct snapshots of 14 different
AS topologies, corresponding to five Tier-1, five Transit and four
Stub ISPs. The data were collected daily during the period
2004-08 with the help of a multicast discovering tool called
\emph{mrinfo}. The tool uses IGMP~\cite{RFC1112} messages to
recursively probe all IPv4 multicast-enabled routers and receive
back all their multicast interfaces as well as the IP addresses of
their neighboring routers. At a second step, the borders between
ASes are delimited with application of two mapping mechanisms:
firstly, an IP-to-AS mapping for assigning a number to each AS
and, secondly, a router-to-AS mapping, via both probabilistic and
empirical rules, for assigning each router (having multiple IP
addresses) to the ``correct'' AS. The method can discover
connections through L2 switches and turns out to be providing an
accurate view of the network topology, circumventing the
complexity and inaccuracy of more conventional measurement tools
such as traceroute.

%Our experiments were first
%conducted on 3 synthetic graph models of N nodes, namely the
%Erd{\H{o}}s-R{\'{e}}nyi~\cite{Erdos1959},
%Barab{\'{a}}si-Albert~\cite{RefWorks:2503} and Grid network, each
%one exhibiting diverse structural properties and, thus,
%illustrating topology's role in identifying the optimal location.
%The E-R graph model and the Grid topology can be considered as the
%two marginal cases as far as the topological structure is
%concerned. The former results in forming a pure probabilistically
%generated network, whereas the latter is the typical case of a
%``strictly'' regural topology. Finally, the B-A model is employed
%herein as it can reproduce a more realistic -than the former two-
%network topology (the degree distribution of the Internet graph is
%known to satisfy a power law~\cite{falouts}). To further explore
%the applicability of our heuristic on real-world networks as well
%as support the introduction's case study, we have performed
%simulations, depicted in~\ref{subsec:data_runs}, using a recently
%collected intra-ISP dataset of physical topologies\cite{ISP_data}.

\subsubsection{Service demand distribution} Our
assessment, at first level, distinguishes between uniform and
non-uniform demand scenarios. Uniform demand scenarios are far
from realistic; yet they let us study the \emph{exclusive} impact
of network topology upon the behavior of the algorithm. On the
contrary, under non-uniform demand distributions, we assess the
algorithm under the \emph{simultaneous} influence of network
topology \emph{and} service demand dynamics. Mathematically
speaking, a Zipf distribution models the preference $w(n;s,N)$ of
nodes $n, n \in \mathcal{N}$ to a given service

\begin{equation}
   w(n;s,N) = \frac{1/n^{s}}{\sum_{l=1}^{N}{1/l^{s}}}.
\end{equation}

Practically, the distribution could correspond to the normalized
request rate for a given service by each network node. Increasing
the value of the parameter $s$ from $0$ to $\infty$, the
distribution asymmetry grows from zero (uniform demand) towards
higher values.

At a second level, we consider two options as to how the
non-uniform service demand emerges spatially within the network.
In the default option, each node randomly generates demand
according to the Zipf law. The alternative is to introduce
geographical correlation by concentrating nodes with high demand
in the same network area. This second scenario lends itself to
modelling services with strongly local scope; whereas, the first
one matches better services that attract geographically broader
interest.

%We conducted (and present) our experimental
%study according to the two main patterns involving service demand;
%the uniform and non-uniform one. With respect to our problem's
%-two- key factors, it is expected that the topology will assert
%the dominant role in the former case while both the topology and
%demand will shape the provision cost (as well as the introduced
%metrics) in the latter. At a second-level, under non-uniform
%demand, we have investigated the case of geographically correlated
%demand generation, practically reflecting the social dimension of
%having co-located users being interested in the very same service.
%-------------------------------
% to clearly illustrate the effect of spatial demand correlation as well as draw safe conclusion,
% we somehow try to eliminate the topology factor by using a grid topology for our experiments.
% But you    % dont need to mention that here, since you just present the evaluation methodology at a high revel.
%  Right ??????
%--------------------------------

\subsubsection{Algorithm performance metrics} We are concerned with two
metrics when assessing the performance of cDSMA. The first
one relates to its accuracy and denotes the degree of convergence
of our heuristic solution to the optimal one, as derived by using
ideal global topology and demand information. It is defined as the
\emph{average normalized excess cost}, $\beta_{alg}$, and equals
the ratio of the service access cost our algorithm achieves,
$C_{alg}(G,\overline{w})$, over the cost achieved with the optimal
solution, $C_{opt}(G,\overline{w})$, for given network topology
$G$ and service demand distribution $\overline{w}$:
\begin{equation}\label{eq9a}
 \beta_{alg}(\alpha;G,\overline{w}) = E[\ \frac{C_{alg}(\alpha;G,\overline{w})}{C_{opt}(\alpha;G,\overline{w})} \ ]
\end{equation}

Clearly, $\beta_{alg}$ depends on the percentage $\alpha$ of the
network nodes participating in the solution. Generally, the error
induced by our heuristic decreases for larger $1$-median
subgraphs, \ie greater $\alpha$ values. Closely related to
$\beta_{alg}$ and its variation are the indices
$\alpha_{\epsilon}$, corresponding to the minimum values of
$\alpha$, where the access cost achieved with our heuristic
algorithm falls within $100\cdot \epsilon \%$ of the optimal.
\begin{equation}\label{eq9b}
\alpha_{\epsilon} = argmin~\{\alpha | \beta_{alg}(\alpha) \leq\
(1+\epsilon)\}
\end{equation}

The second metric is the \emph{migration hop count}, $h_m$, which
is generally a function of the percentage $\alpha$ and reflects
how fast the algorithm converges to its (sub)optimal solution--the
question of whether it does so has been answered in Section
\ref{sec:analysis}. Smaller $h_m$ values imply faster service
deployment and less overhead involved to transport and service
set-up/shut-down tasks.

For any chosen configuration of the involved parameters, we repeat
20 simulation runs to achieve statistical significance.
Typically, the results plotted hereafter are average values
together with the 95\% confidence intervals, estimated over the
20 runs.
%\footnote{ An
%estimated range of values which is likely to include the mean is
%given by a 95\% confidence interval, associated with each sample
%of runs.}
%over the runs.

\section{Simulation results}\label{sec:results}

\subsection{Synthetic topologies: experiments under uniform demand}\label{subsec:uniform}
%Following the above methodology we evaluate the performance of our
%heuristic under a uniform demand pattern. Although this case may
%not be considered as a realistic one (especially in the context of
%a \textit{User-Generated Service}), it is included in our study as
%the first evaluation step and definitely as a benchmark.

As already explained in Section \ref{sec:evaluation}, these
experiments demonstrate how different topologies may facilitate or
encumber our algorithm. All nodes posing the same demand, the
optimal service location coincides with the node featuring the
%lowest aggregate sum of minimum hop counts to all other network
%nodes, or
minimum reciprocal of closeness centrality~\cite{freeman79}.

\begin{figure}[tbp]
%\label{wons}
\begin{minipage}[b]{.55\linewidth}
\centering
\includegraphics[width=6.cm]{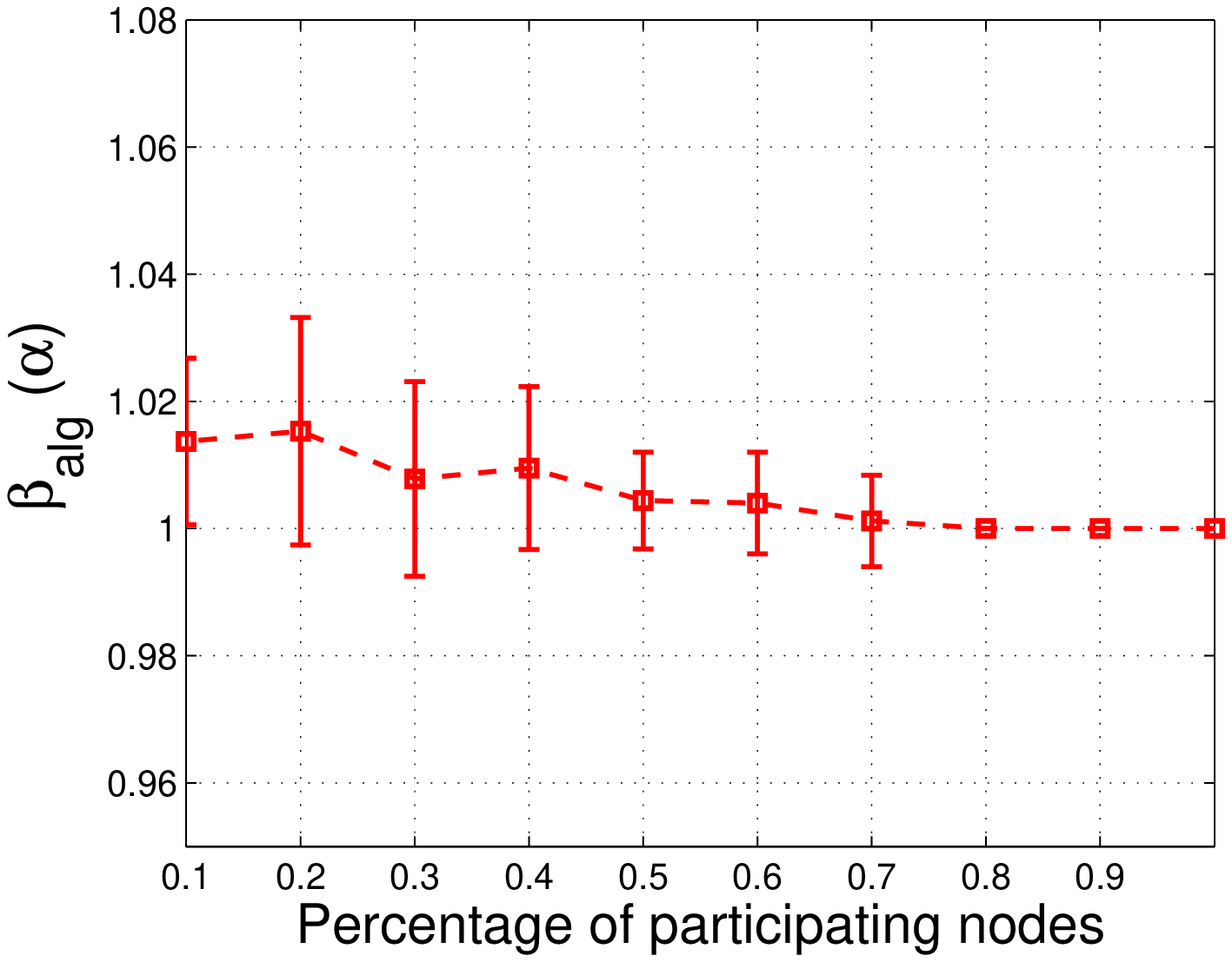}
%\caption[short]{}
\end{minipage}
\hspace{0.35cm}
\begin{minipage}[b]{0.4\linewidth}
\centering
\includegraphics[width=6.cm]{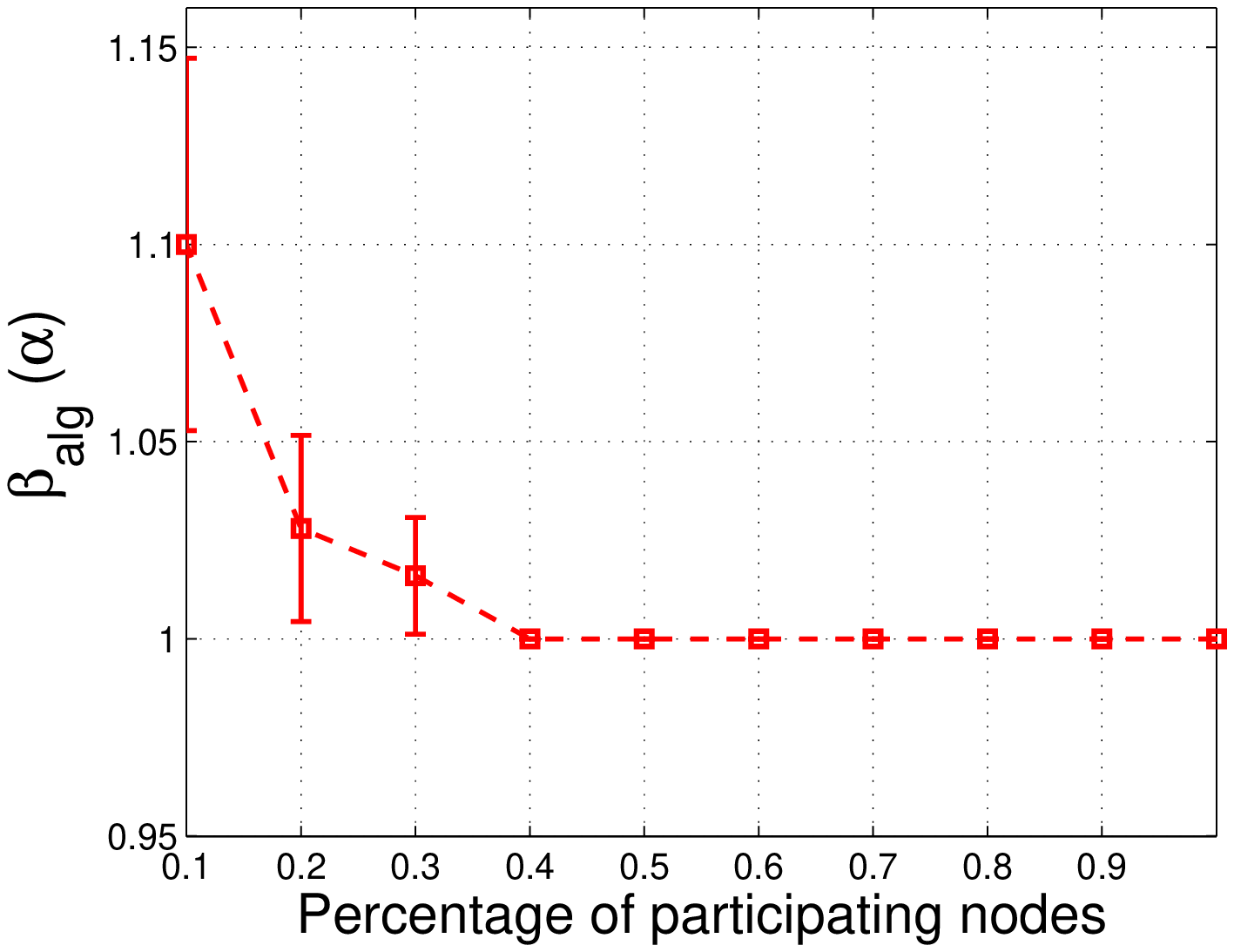}
%\caption[short]{}
\end{minipage}
\caption{cDSMA accuracy for 10x10 nodes B-A graphs(left) and grids(right) as a function
of the 1-median subgraph size: uniform service demand
distribution.} \label{uni_plots}
\end{figure}

% Accuracy
Figure~\ref{uni_plots} plots the average normalized excess cost
$\beta_{alg}$ for B-A %Barab{\'{a}}si-Albert ~\cite{RefWorks:2503}
and grid graphs of 100 nodes. Qualitatively, the two plots are
similar: the error induced by our heuristic decreases
monotonically with
%the percentage of network nodes, $\alpha$,
%participating in
the 1-median subgraph size. However, both starting values,
$\beta_{alg}(0.1)$, and the required subgraph size for achieving
optimal performance, $\alpha_0$, differ. The behavior of cDSMA
on the B-A graph is better. The aggregate service access
cost increase is within $2\%$ of the optimal, even when we include
$10\%$ of network nodes in the 1-median problem solution. On the
contrary, reaching similar accuracy for the grid would require, on
average, no less that $40\%$ of the network nodes.

%The second remark that can be made is that the algorithm performs
%clearly better in B-A than in E-R and grid graphs.
Both grids and B-A graphs have structured connectivity.
Nevertheless, the existence of high-degree nodes, called hubs,
%that dominate connectivity
in B-A graphs, appears to ease more the algorithm operation.
Placing the service on, or nearby, hub nodes suffices for getting
a very good, even when suboptimal, solution, already for small
1-median subgraphs. On the contrary, grids exhibit more regular
structure; all nodes have the same degree and there is smaller
variance in the connectivity properties of neighboring nodes.
Analyzing our simulation runs, we found that the content migration
jumps within the grid are clearly shorter than in B-A graphs; in
many cases the service migrates to neighboring nodes. Even worse,
cDSMA gets more often trapped and terminates prematurely
in suboptimal locations. Said in different way, the attraction
force of the optimal location, \ie the grid center node for odd
$M$ and $N$, a neighborhood around the center otherwise, is not
impelling enough to pull the migrating service all the way to it
except for large enough 1-median subgraphs.
%Its strength, as measured by its $CBC$ value, is rather attenuated
%with the demand mapping process, which .

\begin{table*}[tbp]
\centering \caption{Average normalized excess cost and hopcount for B-A and
Grid networks: uniform service demand} \label{uni_hops}
{\scriptsize
\begin{tabular}{ c || c c c c | c c c c}
\multicolumn{1}{c}{}&\multicolumn{4}{c}{\scriptsize{\textbf{B-A graph}}}&\multicolumn{4}{c}{\scriptsize{\textbf{Grid network}}}\\
\cline{1-9}
& & & & & & & & \\
%\hline
\footnotesize{Network size N} &\footnotesize{$\beta_{alg}(0.1)$}&\footnotesize{ $h_m(0.1)$} &\footnotesize{$\beta_{alg}(0.4)$}&\footnotesize{ $h_m(0.4)$} &\footnotesize{$\beta_{alg}(0.1)$}&\footnotesize{ $h_m(0.1)$} &\footnotesize{$\beta_{alg}(0.4)$}&\footnotesize{ $h_m(0.4)$   }\\
\hline
& & & & & & & & \\
50 \tiny{(25x2)}&1.0453$\pm$0.0524&2.25$\pm$0.31&1.0125$\pm$0.0186&1.95$\pm$0.28&1.0074$\pm$0.0071& 1.40$\pm$0.35 &1.0086$\pm$0.0058&1.10$\pm$0.22\\
%\hline
%& & & & & & & & \\
100 \tiny{(25x4)}&1.0134$\pm$0.0169&2.00$\pm$0.32&1.0070$\pm$0.0164&2.00$\pm$0.00&   1.0569$\pm$0.0333&   1.30$\pm$0.33 & 1.0006$\pm$0.0012 &  1.20$\pm$0.29  \\
%& & & & & & & & \\
%\hline == ==
200 \tiny{(40x5)}&1.0216$\pm$0.0327&2.00$\pm$0.00&1.0028$\pm$0.0061&1.95$\pm$0.16 &   1.0636$\pm$0.0487&1.60$\pm$0.71  &1.0013$\pm$0.0043&2.05$\pm$0.59\\
%& & & & & & & & \\
%\hline
300& 1.0125$\pm$0.0147  &2.00$\pm$0.00& 1.0032$\pm$0.0070 &2.00$\pm$0.00&   & & & \\
& & & & & & & & \\
\hline
\end{tabular}}
\end{table*}

% migration hop
This differentiation in the behavior of cDSMA, hence its
performance, over the two graphs is amplified when we let the
network size and diameter grow. Table \ref{uni_hops} lists the
accuracy and migration hop count, $h_m$, as a function of the
network and 1-median subgraph size, $N$ and $\alpha$,
respectively.

When compared with the $10$x$10$ grid, cDSMA's trend to abort
early the migration process only deteriorates with the increase of
network size and diameter--note that rectangular grids feature
larger diameter and, generally, longer (shortest) paths than
equal-size square grids. This is reflected in both the higher
$b_{alg}$ and the slightly increasing yet overly low $h_m$ values
in Table \ref{uni_hops}. Moreover, there is significantly higher
variance in the convergence speed of the algorithm that implies
dependence on the service generation host, \ie the starting point
of the service migration path.
%Note that for given $\alpha$, the
%1-median subgraph grows linearly with $N$.
On the contrary, two remarks can be made as to how the cDSMA
performance scales in B-A graphs: a) its accuracy remains
practically the same as the network size grows; and b) the network
size does not affect the convergence speed of the algorithm, which
needs on average two migration hops to reach a host with
very-close-to-optimal access cost. In other words, even under the
unfavorable hypothesis of uniform service demand, the algorithm
exhibits attractive scalability properties when running over B-A
graphs.

\subsection{Synthetic topologies: experiments under non-uniform demand}\label{subsec:non-uniform}
We repeat our experiments with B-A and grid graphs, only now we
introduce asymmetry in the service demand distribution within the
network.
%namely, some nodes may request a service more
%frequently/intensively than others.
We consider and study separately the two options described in
\ref{sec:evaluation} as to how this asymmetry emerges
\emph{spatially} across the network.

\subsubsection{Spatially random demand distribution}
The service demand is distributed randomly in the network.
Interest in the service may vary but is spread across the network
nodes without any phenomena of spatial concentration. The service
demand asymmetry is modelled by Zipf distributions of variable
skewness parameter values $s$.

\begin{figure}[ht]
%\label{wons}
\begin{minipage}[b]{.55\linewidth}
\centering
\includegraphics[width=6.cm]{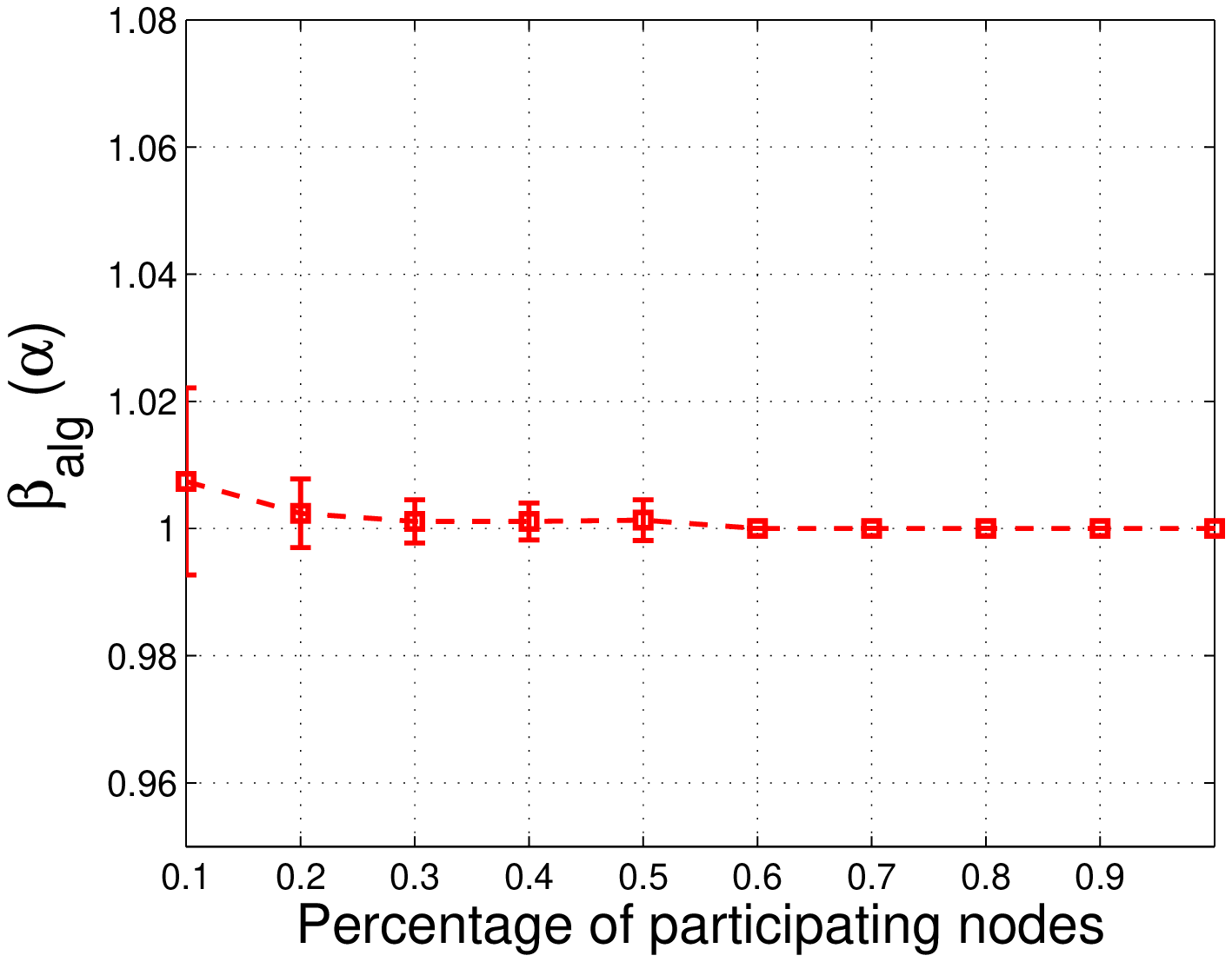}
%\caption[short]{}
\end{minipage}
\hspace{0.35cm}
\begin{minipage}[b]{0.4\linewidth}
\centering
\includegraphics[width=6.cm]{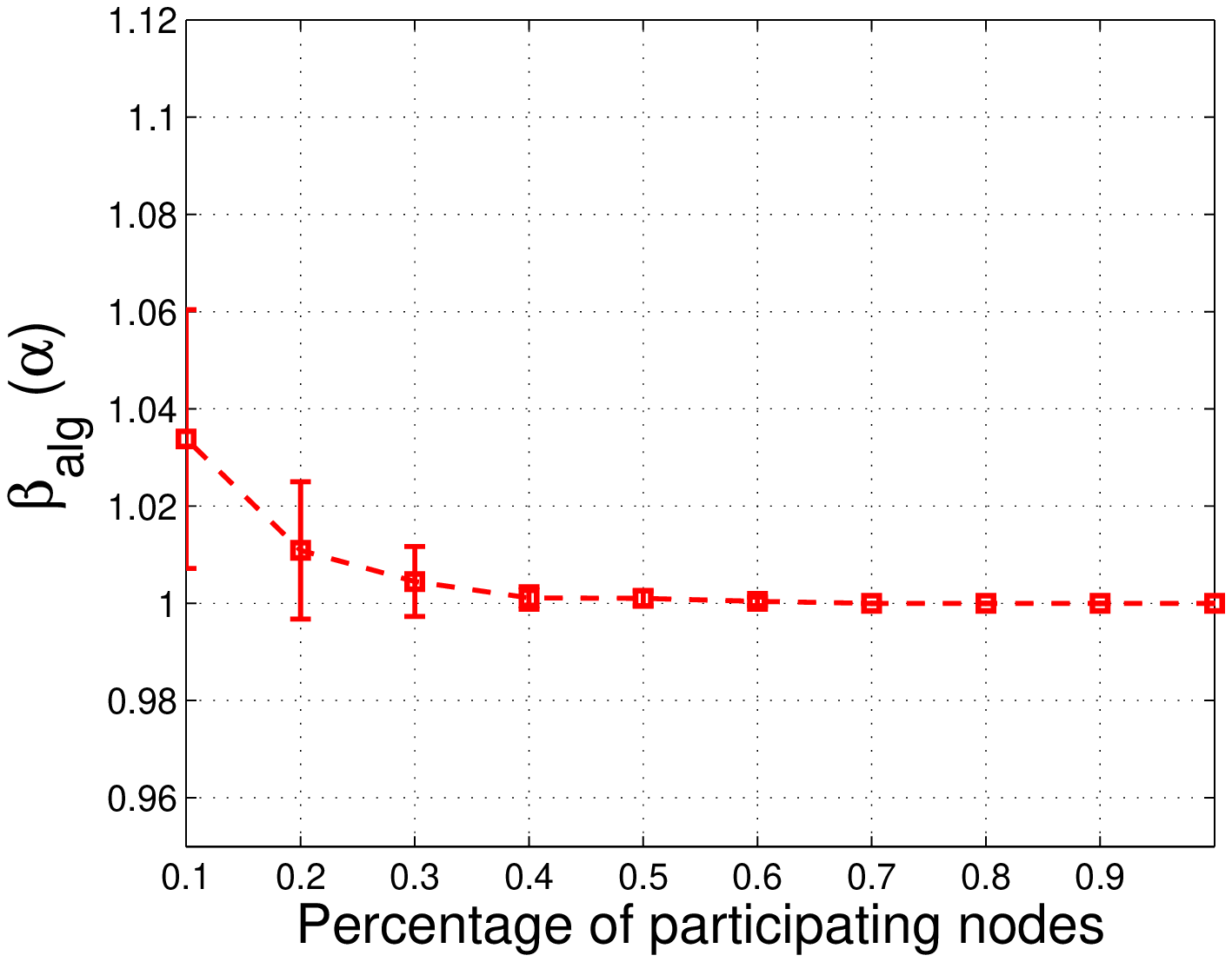}
%\caption[short]{}
\end{minipage}
\caption{cDSMA accuracy for 10x10 nodes B-A graphs(left) and grids(right) as a function
of the 1-median subgraph size: non-uniform service demand
distribution ($s=1$).} \label{fig:synthetic_non_uni}
\end{figure}

\begin{table*}[ht]
\centering \caption{Average normalized excess cost and hopcount for B-A and
Grid networks: non uniform service demand (\lowercase{s}=1)} \label{non_uni_hops} {\scriptsize
\begin{tabular}{ c || c c c c | c c c c}
\multicolumn{1}{c}{}&\multicolumn{4}{c}{\scriptsize{\textbf{B-A graph}}}&\multicolumn{4}{c}{\scriptsize{\textbf{Grid network}}}\\
\cline{1-9}
& & & & & & & & \\
%\hline
\footnotesize{Network size N} &\footnotesize{$\beta_{alg}(0.1)$}&\footnotesize{ $h_m(0.1)$} &\footnotesize{$\beta_{alg}(0.4)$}&\footnotesize{ $h_m(0.4)$} &\footnotesize{$\beta_{alg}(0.1)$}&\footnotesize{ $h_m(0.1)$} &\footnotesize{$\beta_{alg}(0.4)$}&\footnotesize{ $h_m(0.4)$   }\\
\hline
& & & & & & & & \\
50\tiny{(25x2)}&1.0156$\pm$0.0205& 1.60$\pm$0.48&1.0014$\pm$0.0038&1.85$\pm$0.35&1.0083$\pm$0.0068&1.50$\pm$0.37 & 1.0062$\pm$0.0047& 1.10$\pm$0.22 \\
%\hline
%& & & & & & & & \\
100\tiny{(25x4)}& 1.0070$\pm$0.0143&2.15$\pm$0.35&1.0015$\pm$0.0034&1.90$\pm$0.22 &1.0553$\pm$0.0319& 1.35$\pm$0.35& 1.0025$\pm$0.0020&1.15$\pm$0.26  \\
%& & & & & & & & \\
%\hline == ==
200 \tiny{(40x5)}&1.0016$\pm$0.0031&1.90$\pm$0.22&1.0003$\pm$0.0007&2.05$\pm$0.16& 1.0510$\pm$0.0346& 1.47$\pm$0.73& 1.0031$\pm$0.0047& 1.90$\pm$0.65\\
%& & & & & & & & \\
%\hline
300 \tiny{ (60x5)}&1.0029$\pm$0.0068& 2.05$\pm$0.16& 1.0000$\pm$0.0000&2.00$\pm$0.00&  &  & & \\
& & & & & & & & \\
\hline
\end{tabular}}
\end{table*}

Figure \ref{fig:synthetic_non_uni} plots the average normalized
excess cost for $s=1$. Again, the impact on the two types of
synthetic graphs is different. For B-A graphs, the already high
accuracy of cDSMA improves further. It lies within $1\%$ of the
optimal already for $a = 10\%$ and $N=100$ nodes and improves over
the respective values under uniform service demand for all network
sizes. Overall, the demand asymmetry magnifies the existing
attraction forces towards the globally optimal service location,
helping the algorithm to move away from locally optimal, yet
globally suboptimal, hosts. The convergence speed of cDSMA is
practically the same for networks in the range of 100 to 300
nodes.

On the other hand, the algorithm performance over grids is almost
invariable with many entries in Tables \ref{uni_hops} and
\ref{non_uni_hops} remaining practically the same. In fact,
grid-like topologies set a negative benchmark for the performance
of cDSMA requiring far more nodes within the 1-median subgraph
to yield comparable accuracy with B-A graphs for the same network
size and service demand distributions. Or, equivalently, for the
same 1-median subgraph size, it needs a significantly higher
asymmetry in the service demand distribution, as shown more
clearly below.

\subsubsection{Spatially correlated demand distribution}
The service demand now exhibits spatial correlation. Interest in
the service is concentrated in a particular graph neighborhood, as
the case may be when the service has strongly local scope.

We model these scenarios by inserting a cluster of nodes with high
service demand in a random area within a grid. The $K$ cluster
nodes collectively represent some percentage $z\%$ of the total
demand for the service, whereas the other $N-K$ nodes share the
remaining $(100-z)\%$ of the demand. We call the ratio $z/(100-z)$
the demand \emph{spatial contrast} $C_{sp}$. In 2D grids, clusters
are formed by a cluster head node together with its four $1$-hop
($R=1$) or twelve $1$- and $2$-hop ($R=2$) neighbors. The contrast
can then be written as:
\begin{equation}\label{eqn:contrast}
    C_{sp}(R,s) = \frac{\sum_{n=1}^{K} w(n;s,N)}{\sum_{n=K+1}^N w(n;s,N)} = \frac{\sum_{n=1}^K 1/n^s}{\sum_{n=K+1}^N 1/n^s}
\end{equation}

and the average normalized excess cost becomes a function of both
$\alpha$ and the contrast value.
%$\beta_{alg}(\alpha, C_{sp})$.

%Without loss of generality we focus on the case in which the
%cluster of high demands is -somehow- limited within the radius of
%$1$ hop; in the Table~\ref{tbl:contrast} we present the normalized
%excess cost with and without spatial correlation of demands, in a
%10x10 grid topology. Having the top demand values stemming from a
%certain neighborhood we actually ``produce'' a single pole of
%strong attraction for the migrating service which is -now- capable
%of following the demand gradient more effectively than before.
%When the percentage of the total demand held by the cluster nodes
%grows larger, resulting higher spatial contrast, the pole gets
%even more intense leading the service firmly to the optimal
%location.

The values of $\beta_{alg}(\alpha, C_{sp})$ under spatially random
and correlated ($R=1$) distribution of demands are reported in
Table~\ref{tbl:contrast} for a $10$x$10$ grid topology. Having the
top demand values stemming from a certain network neighborhood we
actually ``produce'' a single pole of strong attraction for the
migrating service. Our algorithm now follows the demand gradient
more effectively than before. As the percentage of the total
demand held by the cluster nodes grows larger, resulting in higher
spatial contrast, the pole gets even stronger driving the service
firmly to the optimal location.

\begin{table}[tbp]
\centering \caption{Average normalized excess cost under spatially
correlated service demand} \label{tbl:contrast} {\footnotesize
\begin{tabular}{c ||  c  c  c}
\hline 
& & & \\
skewness $s$ & $C_{sp}(1,s)$ &  $\beta_{alg}(0.1)$ &
$\beta_{alg}(0.1,
C_{sp})$ \\
& & & \\
\hline
& & & \\
 1  &  0.786   &    1.035$\pm$0.027& 1.016$\pm$0.023\\
 2  &  8.540   &    1.003$\pm$0.006& 1.0$\pm$0.0\\
& & & \\
\hline
\end{tabular}}
\end{table}

It follows that $R=2$ and higher service demand distribution
asymmetry $s$ only sharpen the spatial demand contrast,
concentrating more the demand in space; namely, 61\% of the
service demand is spread across $13$ nodes for ($s=1, R=2$) and
89\% across five nodes for ($s=2, R=1$). The attractive forces
applied on the migrating service grow so that the algorithm finds
easier its way towards the optimal location.

\subsection{Experiments on real-world network topologies}
\label{subsec:data_runs}

%Moving forward to the main evaluation part of our work, we seek to
%obtain further insights on the performance, as well as explore the
%applicability, of our heuristic on real-world networks;
%furthermore, the results coming from this section may provide good
%support to the introduction's case study. To this end we have
%performed a wide range of simulations using a Tier-1 ISP
%topologies dataset\cite{dataset}.

%Having shown and discussed the behavioral dynamics of the
%algorithm over synthetic topologies, we turn our attention to
%real-world networks.
Real-world networks do not typically have the predictable
structure and properties of B-A graphs and grids and may differ
substantially the one from another. Nevertheless, we show below
that insightful analogies can be drawn between these networks and
the B-A and grid topologies regarding the behavior of our service
placement mechanism.

The ISP topology dataset includes 264 Tier-1, 244 Transit, and 342
stub ISP network topology files. They represent snapshots of 14
different ISP network topologies, as measured at different time
epochs within the interval 2004-2008. We have focused on the
larger Transit- and Tier-1 ISP datafiles, with topology sizes
ranging from $100$ to $1000$ nodes, approximately. We chose to
identify and primarily work with datasets, where the size of the
maximal connected component, to be denoted by $mCC$, approaches
the full vertex set of the measured graph~\footnote{Many of the
original network topology files that have been released miss some
edges, resulting in more than one connected components. The
measurement inaccuracies are mainly due to filtering incurring in
the ISP borders or ISP hardware updates.}. The connected
components for each topology are retrieved via the well-known
linear-time algorithm of Karp and Tarjan ~\cite{tarzan}. Herein we
present and discuss results from a representative subset of the
datasets we experimented with, as shown in \ref{tbl:data_info}.
They correspond to snapshots of four Tier-1 and three Transit ISP
networks and were chosen so that there is adequate
%Table~\ref{tbl:data_info} lists a representative subset
%of the topologies we experimented with, chosen deliberately to
variance in size, diameter, and connectivity degree statistics.

%************************************** to be included in the paper
% A more exhaustive evaluation of our algorithm over a broader set of
% these topology snapshots is reported in \textbf{[ref:techRep]}.
%**************************************************************************

\begin{table}[ht]
\centering \caption{Selected ASes} \label{tbl:data_info}
{\footnotesize
\begin{tabular}{c c||  c  c  c}
\hline
Type & Dataset id &AS Number   & Name &  Extracted on\\
\hline
       &     &          &                     &              \\
Tier-1 & 36  &  3549    &   Global Crossing   &   2006-05-03 \\
       & 35  &   -//-   &       -//-          &   2006-07-13\\
       & 33  &  2914    &     NTTC-Gin        &   2008-12-03 \\
       & 23  &  1239    &     Sprint          &   2008-09-30 \\
       & 21  &  1239    &     -//-            &   2008-08-27 \\
       & 27  &  3356    &     Level-3         &   2004-09-24 \\
       & 13  &   -//-   &       -//-          &   2005-03-17 \\
\hline
       &  &     &                       &              \\
Transit& 46  &  3292    &      TDC            &   2008-05-01\\
       & 41  &   680    &     DFN-IPX-Win     &   2006-05-03\\
       & 40  &   786    &     JanetUK         &   2008-07-01 \\
       &  &     &                       &              \\
\hline
\end{tabular}}
\end{table}

%\subsubsection{Results}
Table~\ref{ISP_results} summarizes the performance of cDSMA over
the real-world topologies. The listed results include the minimum
number of nodes $|G^{i}|$ required to achieve a solution that lies
within $2.5\%$ of the optimal and the average migration hop count
$h_m$ for different levels of asymmetry in the service demand
distribution.

The main observation is that both $\alpha_{0.025}$ and $|G^{i}|$
show a remarkable insensitivity to both topological structure and
service demand dynamics. Although the considered ISP topologies
differ significantly in size and diameter, the number of nodes we
need to include in the $1$-median problem solution does not
change. On the contrary, around half a dozen nodes suffices to get
good accuracy even under uniform demand distribution, the least
favorable scenario for our algorithm as discussed in Sections
\ref{subsec:uniform} and \ref{subsec:non-uniform}. Likewise,
$\alpha_{0.025}$ and $|G^{i}|$ remain practically invariable with
the demand distribution skewness. Although for larger values of
$s$, few nodes exhibit asymmetrically large demand values and
become stronger attractors for the algorithm, the added value for
the algorithm accuracy is negligible.

This two-way insensitivity of our algorithm bears two significant
implications for its more practical implementation aspects.
Firstly, the computational complexity when solving instances of
the 1-median problem can be negligible and scales well with the
size and diameter of the network. Secondly, the algorithm
performance is robust to possibly inaccurate estimates of the
service demand each node poses.

%To be first observed is the number of nodes needed to effectively
%(i.e having $\epsilon$ low) solve the placement problem on
%topologies taken from different ISPs and covering a wide range of
%values with respect to their structural characteristics (i.e.
%diameter, number of links or size). Nevertheless, the nodes
%compromising the optimization's solution space are steadily kept
%below a dozen even for networks consisting of more than 300 nodes
%(i.e. an average Tier-1 size) and exhibiting variance in diameter
%within an order of magnitude. Moreover, having the parameter $s$
%grown larger than zero, results in a few nodes exhibiting very
%large demand values but actually brings about negligible influence
%on the $\alpha_{0.025}$ value. This important characteristic of
%our algorithm allows for significant computational benefits to be
%drawn by achieving an efficient solution with a limited input,
%regardless demand shifts. Otherwise, as already stressed, dealing
%with large optimizations (even under demand uniformity) comes with
%a great cost if not being unfeasible.

A last remark is appropriate with respect to the topological
structure of these real-world topologies. The equally well
algorithm behavior under uniform demand distribution ($s=0$)
suggests that there is already adequate structure in the network
topology. As the probability distribution of the connectivity
degree in these networks suggests (see
Fig.~\ref{fig:ISP_degree_distr}), there are high-degree nodes and
considerable variance in the connectedness properties of nodes
across the network. In fact, the high-degree nodes serve in a way
similar to the high-degree nodes in B-A graphs; they are easily
``identifiable'' by our algorithm as low-cost hosts for the
migrating service and, even for small 1-median subgraph sizes,
their attraction forces are strong enough to pave a cost-effective
service migration path.

%========== ISP topologies degree distribution
\begin{figure*}[ht]
\centering
 \subfigure[\scriptsize{dataset 36}]{
\includegraphics[scale=0.26]{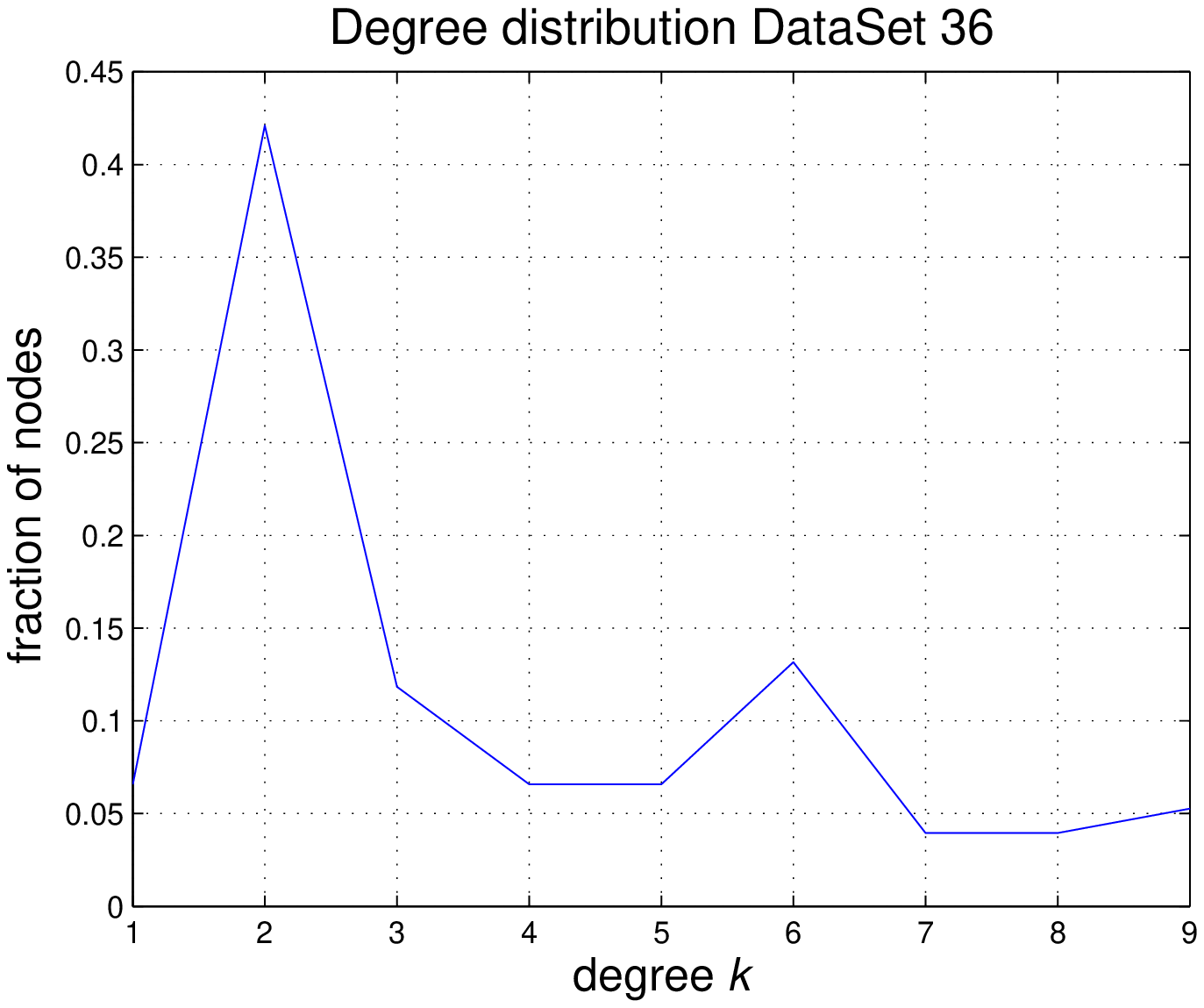}
\label{fig:DataSet_36} }
 \subfigure[\scriptsize{dataset 35}]{
\includegraphics[scale=0.26]{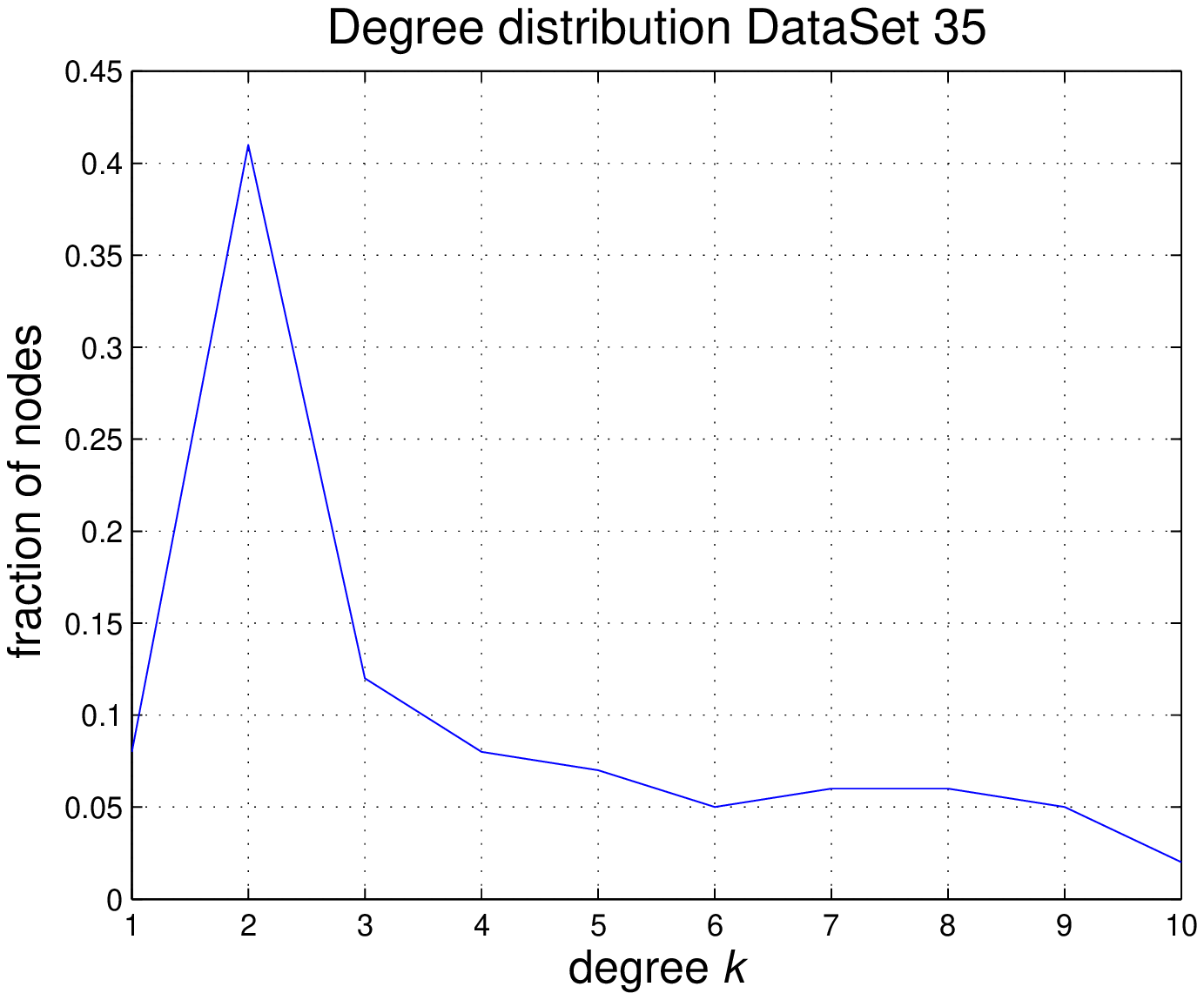}
\label{fig:DataSet_35} }
 \subfigure[\scriptsize{dataset 33}]{
\includegraphics[scale=0.26]{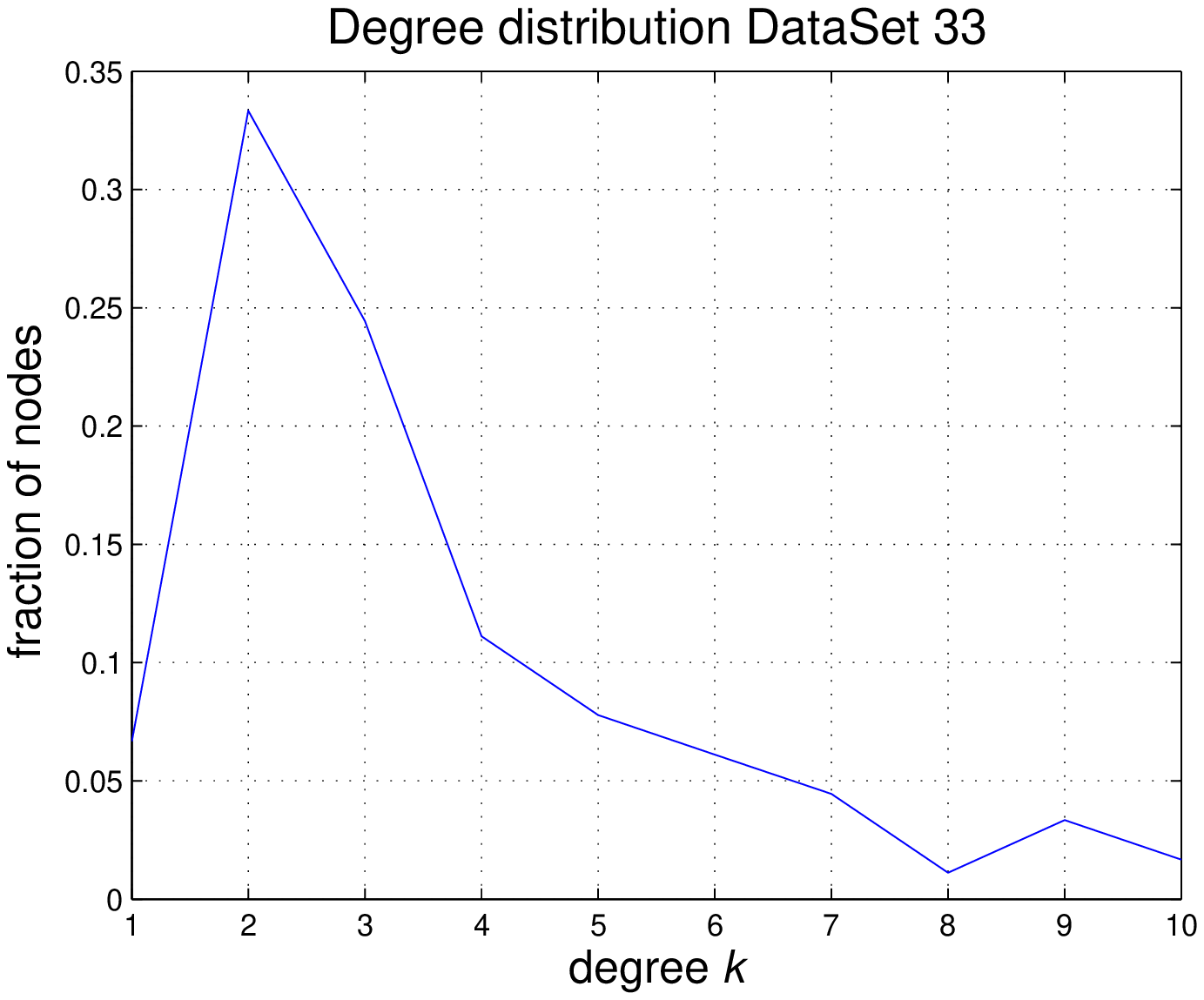}
\label{fig:DataSet_33} }
 \subfigure[\scriptsize{dataset 23}]{
\includegraphics[scale=0.26]{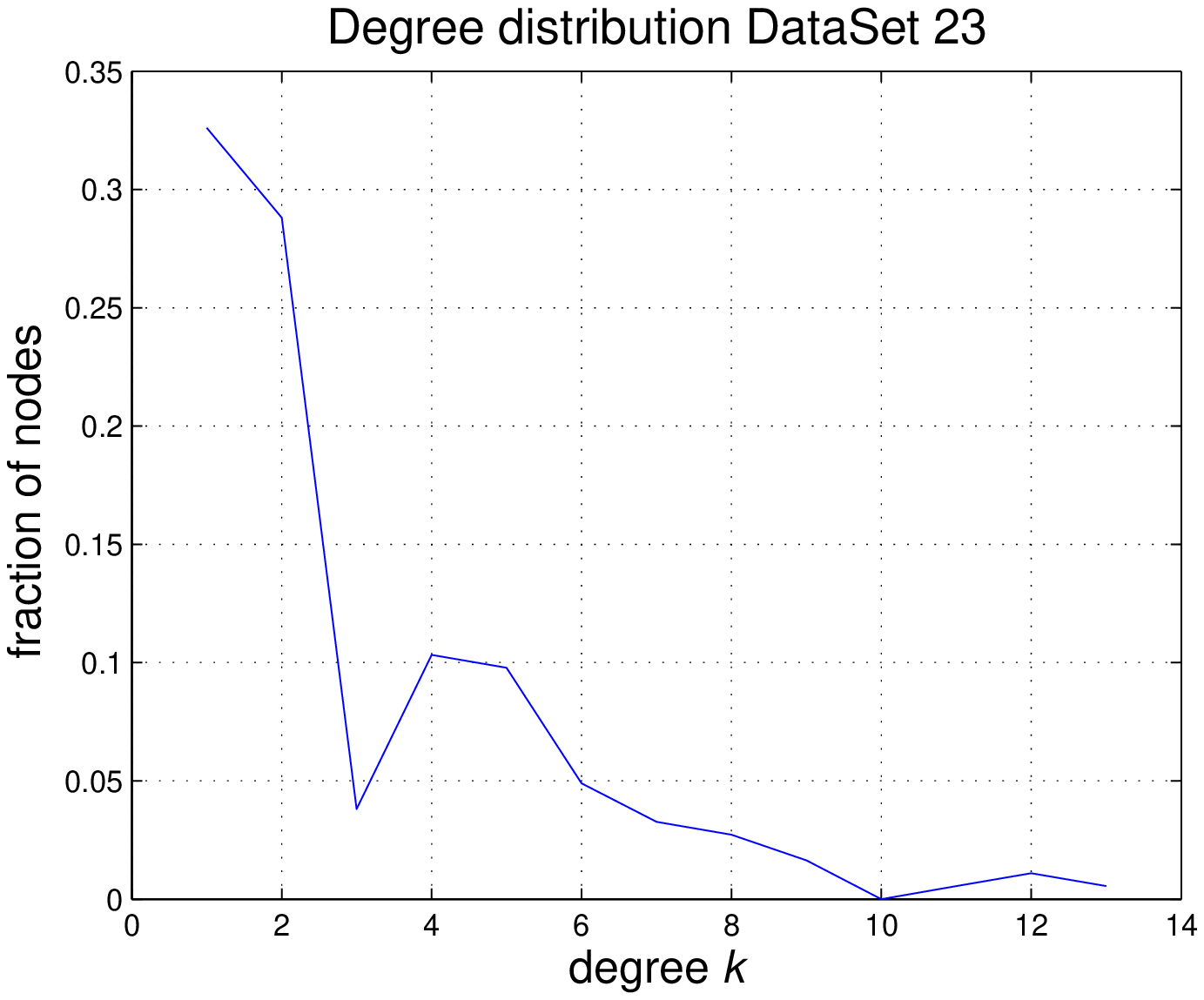}
\label{fig:DataSet_23} }
 \subfigure[\scriptsize{dataset 21}]{
\includegraphics[scale=0.26]{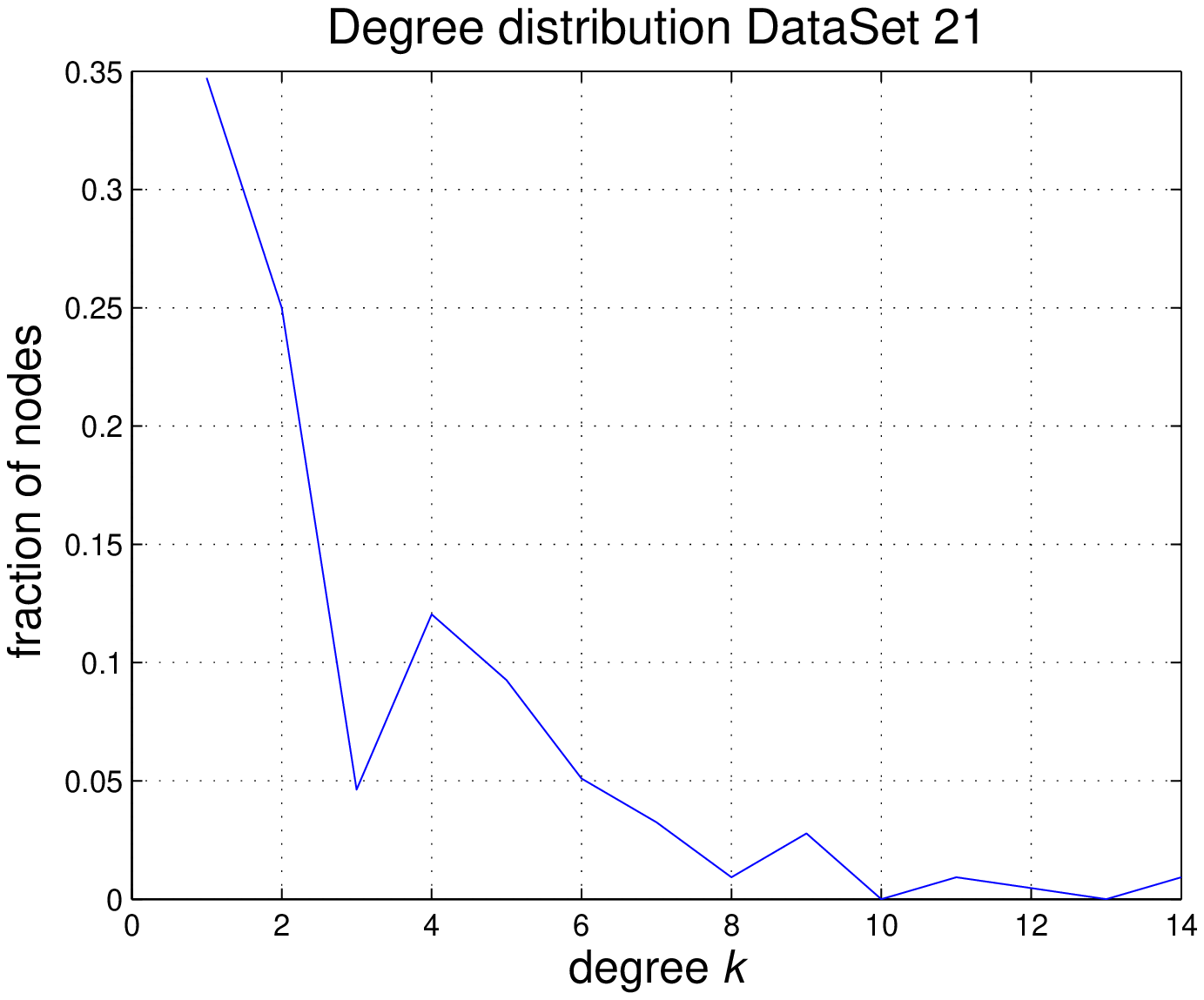}
\label{fig:DataSet_21} }
 \subfigure[\scriptsize{dataset 27}]{
\includegraphics[scale=0.26]{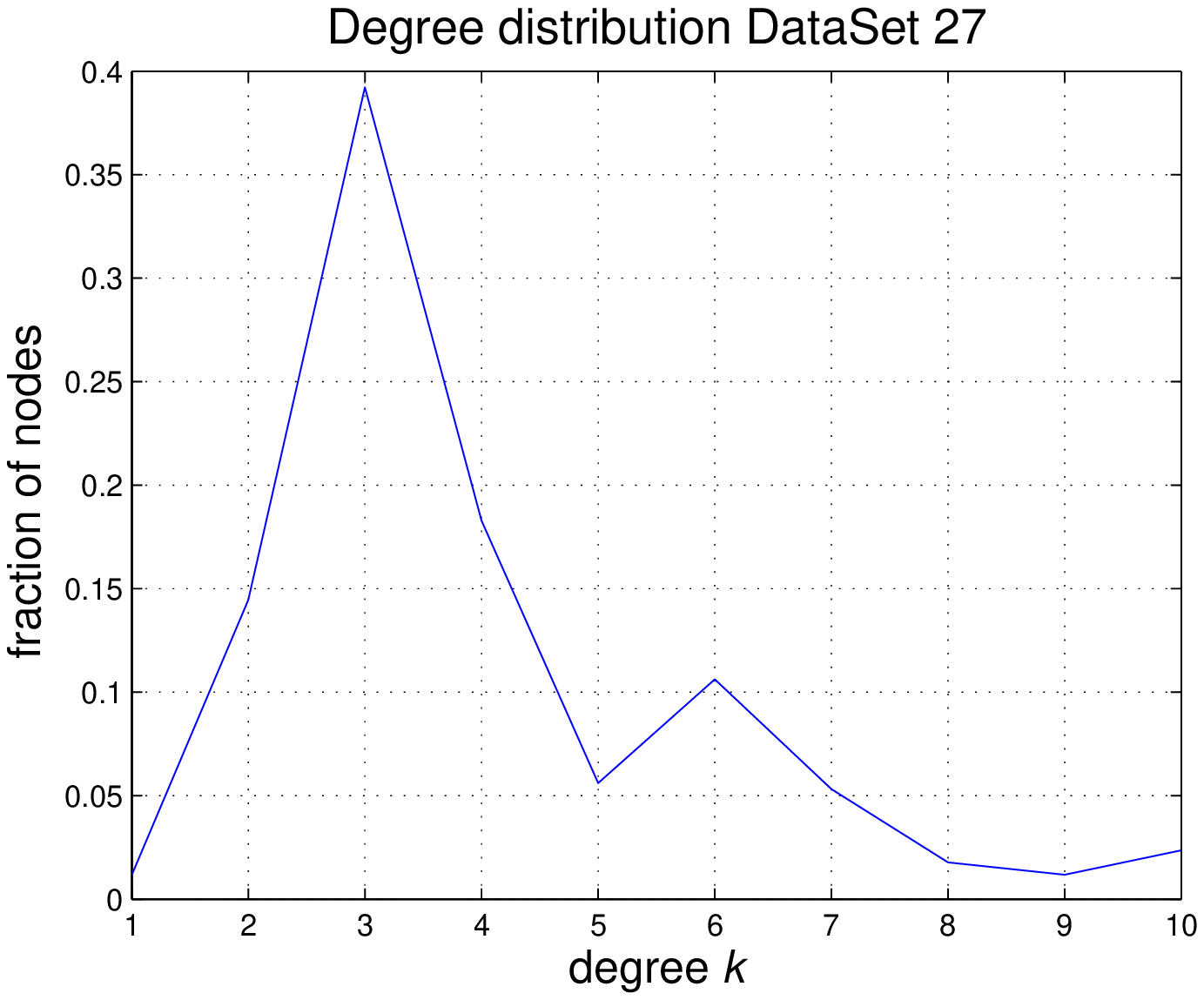}
\label{fig:DataSet_27} }
 \subfigure[\scriptsize{dataset 13}]{
\includegraphics[scale=0.26]{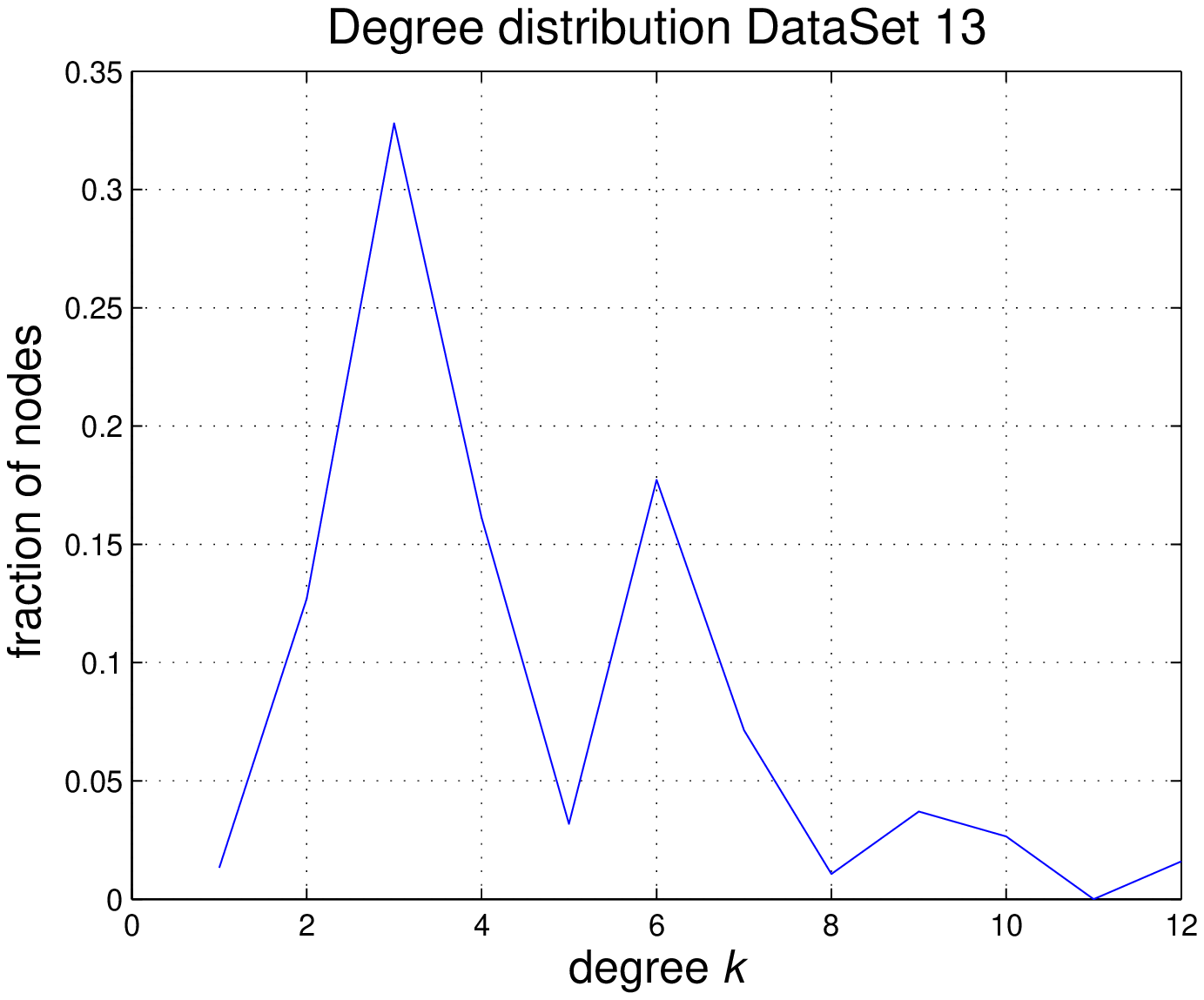}
\label{fig:DataSet_13} }
 \subfigure[\scriptsize{dataset 46}]{
\includegraphics[scale=0.26]{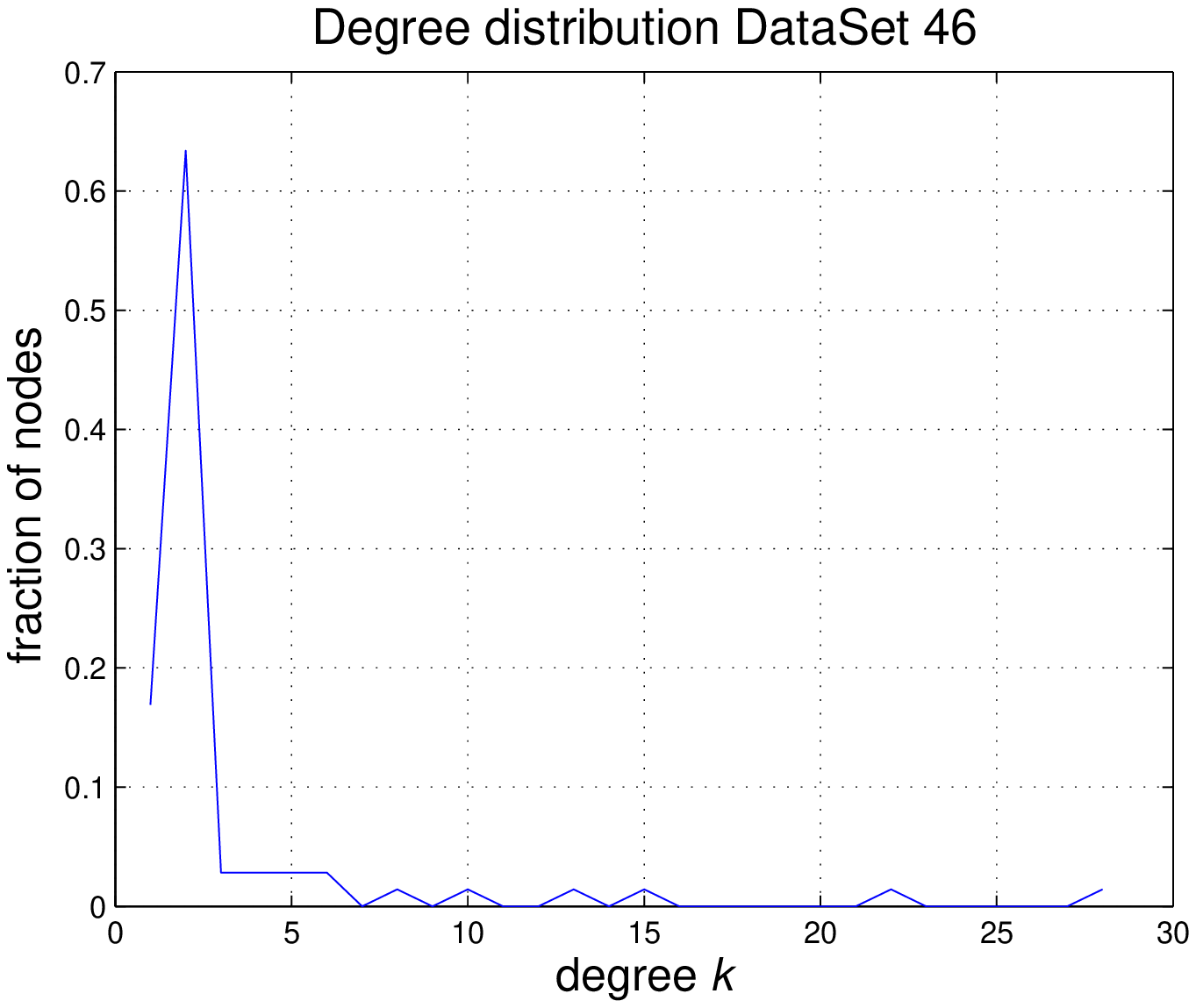}
\label{fig:DataSet_46} }
 \subfigure[\scriptsize{dataset 41}]{
\includegraphics[scale=0.26]{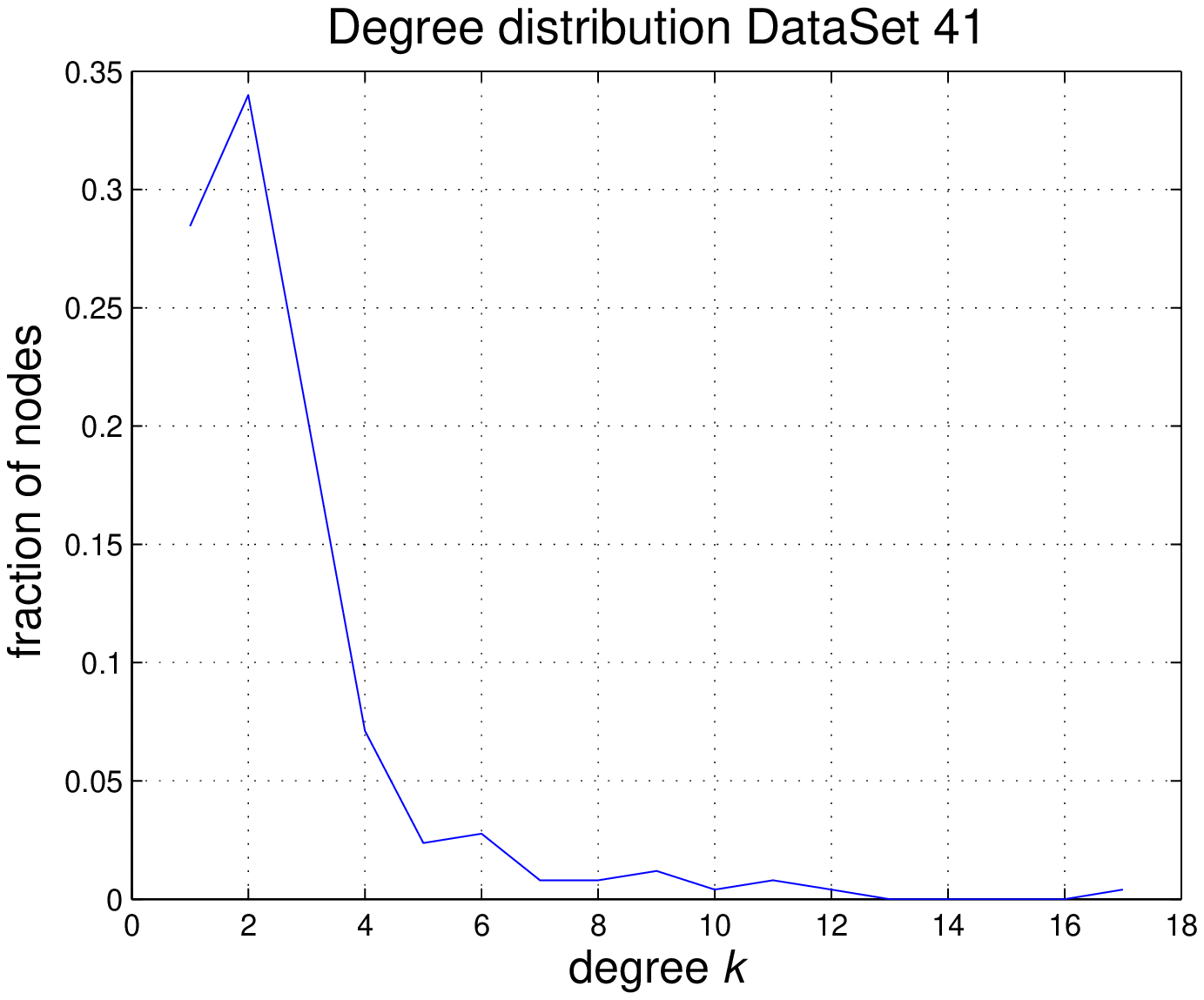}
\label{fig:DataSet_41} }
 \subfigure[\scriptsize{dataset 40}]{
\includegraphics[scale=0.26]{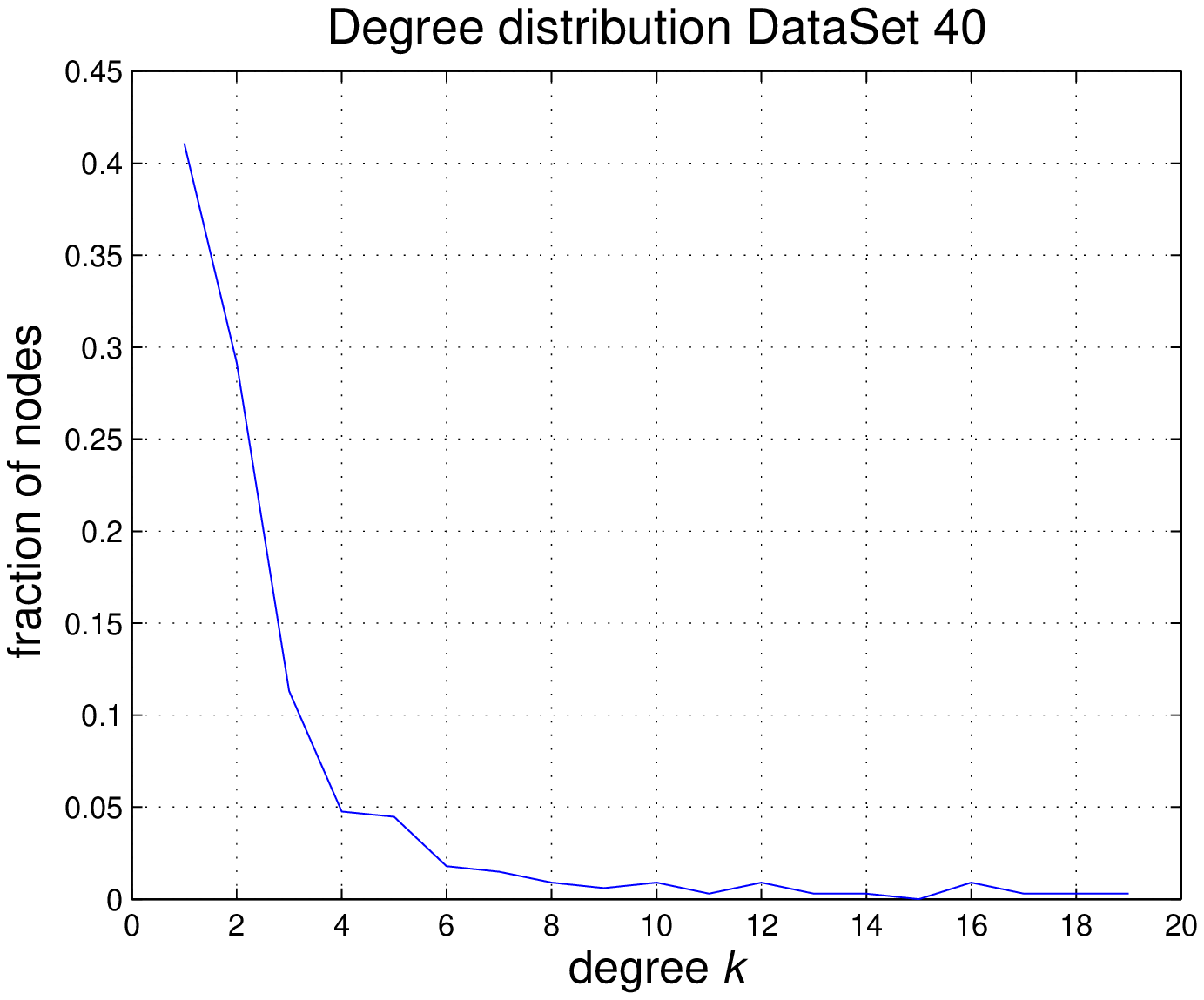}
\label{fig:DataSet_40} }
\caption[Optional caption for list of figures]{Degree distributions of the ISP topology snapshots}
\label{fig:ISP_degree_distr}
\end{figure*}

\begin{table*}[ht]
\centering \caption{Mean value of $\alpha_{\epsilon}$ for various
datasets under different demand distributions} \label{ISP_results}
{\scriptsize
\begin{tabular}{c c c c c || c c |c c |c c}
\multicolumn{5}{c}{}&\multicolumn{2}{c}{\scriptsize{\textbf{s=0}}}&
\multicolumn{2}{c}{\scriptsize{\textbf{s=1}}}&
\multicolumn{2}{c}{\scriptsize{\textbf{s=2}}}\\
\cline{1-11}
& & & & & & & & & & \\
%\hline
Type&Dataset id &mCC nodes& Diameter&<Degree>&\scriptsize{$
\alpha_{0.025}$}& $\lceil|G^{i}|\rceil$& \scriptsize{$
\alpha_{0.025}$}&$\lceil |G^{i}|
\rceil$& \scriptsize{$\alpha_{0.025}$}&$\lceil
|G^{i}| \rceil$\\
\hline
& & & & & & & & & & \\
\textbf{Tier-1}&36 & 76 & 10 & 3.71&0.047$\pm$0.001&4&0.047$\pm$0.002
&4&0.046$\pm$0.001&4\\
%\hline
%& & & & & & & & & & & \\
&35&100&9 & 3.78&0.045$\pm$0.002&5&0.045$\pm$0.001&5&0.043$\pm$0.001&5\\
%& & & & & & & & & & & \\
%\hline == ==
&33&180&11&3.53&0.024$\pm$0.002&5&0.022$\pm$0.002
&4&0.019$\pm$0.002&4\\
%& & & & & & & & & & & \\
%\hline
&23&184&13&3.06&0.019$\pm$0.002&4&0.018$\pm$0.002&4&0.017$\pm$0.002&4\\
%& & & & & & & & & & & \\
%\hline
&21&216&12&3.07&0.016$\pm$0.002&4&0.016$\pm$0.002&4&0.014$\pm$0.003&4\\
%& & & & & & & & & & & \\
&27&339&24&3.98&0.018$\pm$0.002&7&0.017$\pm$0.002&6& 0.014$\pm$0.003&5\\
&13&378&25&4.49&0.012$\pm$0.002&5 &0.012$\pm$0.002&5& 0.011$\pm$0.002&5\\
\hline
                &  &  &   &    &               &  &               & &                &       \\
\textbf{Transit}&46&71&  9&3.30&0.033$\pm$0.003&3 &0.027$\pm$0.004&2& 0.026$\pm$0.003&2\\ %
                &41&253&14&2.62&0.019$\pm$0.003&5 &0.015$\pm$0.003&4&0.015$\pm$0.003&4\\
                &40&336&14&2.69&0.012$\pm$0.003&5 &0.012$\pm$0.002&5 &0.013$\pm$0.002&5\\
%& & & & & & & & &
\hline
\end{tabular}}
\end{table*}
%\textbf{PENDING:}
%\begin{enumerate}
%\item explain in brief the way the data have been collected
%\item \textbf{put forward arguments to defend the appropriatenece of the data for testing our heuristic}
%\item  illustrate the effectiveness of our approach in these topologies - plot $\alpha^{+}=F(s)$ for different data sets?
%\item built a table showing the mean hopcount to optimal for realistic topologies under power-law demanding - could we have the average over a set of topologies?
%\end {enumerate}

\subsection{cDSMA vs. locality-oriented service migration}
\label{subsec:positioning} The way cDSMA determines the service
migration path
%Although
%intuitively sound, having
%by selecting the candidate future service host nodes
%selected
%according to their ability to concentrate demand flow towards the
%current service host,
clearly differentiates from typical ``local-search'' approaches.
%at least not from a spatial standpoint.
Local-search solutions such as the R-ball heuristic
in~\cite{SLOSB-DSMSISD-10}, for example, restrict \emph{a-priori}
their search for a better service host to the neighborhood of the
current service location. On the contrary, cDSMA focuses its
search for the next service host in certain directions. Nodes
lying across a (shortest) path, which serves many requests for a
service, exhibit relatively high $wCBC$ values. The resulting
1-median subgraph is spatially stretched across that path and
therefore oversteps the local neighborhood ``barriers''.

% This property assisted by topological aspects provides the advantage of long ``jumps'' for the service, or equivalently small migration hop-count $h_{m}$
% While there is a clear trade-off between the neighborhood size (to search, in each step, for a better solution than the current one) and the precision of the outcome, in our case the important thing turns out to be the shape of that neighborhood. It is likely that the $wCBC$ values of the nodes lying across a high demand (shortest) path providing the advantage of long ``jumps'' for the service, or equivalently small migration hop-count $h_{m}$.

To compare the above two approaches, we have implemented a
Locality-Oriented Migration heuristic, hereafter abbreviated to
LOM. In LOM we solve the 1-median problem within the direct
neighborhood of $R$ hops around the current host and apply the
same demand mapping mechanism (\ref{subsec:mapping}) to capture
the demand load from nodes lying further than $R$ hops away from
the current service host.
%Being especially interested in the
%number of hops the service executes, we evaluate the two
%approaches knowing a-priori the distance that should be traversed
%to the optimal location.
The comparison of the two approaches for each ISP topology
snapshot in Table~\ref{tbl:comparison} proceeds as follows. We
first generate asymmetric service demand (Zipf distribution with
$s=1$) across the network. We compute the globally optimal service
host node and we select a fixed set of service generation nodes,
at $D_{gen}$ hops away from the optimal service location. We then
calculate the values of $h_{m}$ and $\beta_{alg}$
metrics~\footnote{The void entries are due to the fact that the
most distant node to the global minimum location, lies at some
distance smaller than the $D_{gen}$ value; a piece of information
not captured by the diameter value.} for the two approaches,
cDSMA and LOM. For cDSMA, we have set the parameter
$\alpha=3\%$, meaning that the 1-median subgraph size ranges from
6 to 12 nodes for the networks listed in
Table~\ref{tbl:comparison}.

\begin{table*}[ht]
\centering \caption{convergence speed and accuracy comparison between LOM and cDSMA on realistic topologies} \label{tbl:comparison} {\scriptsize
\begin{tabular}{ c || c |c |c| c|| c| c |c |c|| c| c |c |c|| c| c |c |c}
\hline
\multicolumn{1}{c||}{}&\multicolumn{4}{c||}{\scriptsize{\textbf{Dataset 23}}}&\multicolumn{4}{c||}{\scriptsize{\textbf{Dataset 33}}}&\multicolumn{4}{c||}{\scriptsize{\textbf{Dataset 27}}}&\multicolumn{4}{c}{\scriptsize{\textbf{Dataset 13}}} \\
%\cline{1-17}
%& & & & & & & & & & & & & & & & \\
\multicolumn{1}{c||}{\footnotesize{$D_{gen}$}}  &\multicolumn{2}{c|}{\scriptsize{LOM}}&\multicolumn{2}{c||}{\scriptsize{cDSMA}}&\multicolumn{2}{c|}{\scriptsize{LOM}}&\multicolumn{2}{c||}{\scriptsize{cDSMA}}&\multicolumn{2}{c|}{\scriptsize{LOM}}&\multicolumn{2}{c||}{\scriptsize{cDSMA}}&\multicolumn{2}{c|}{\footnotesize{LOM}}&\multicolumn{2}{c}{\scriptsize{cDSMA}}\\
%& & & & & & & & & & & & & & & & \\
\multicolumn{1}{c||}{}       &\multicolumn{1}{c|}{\textbf{\scriptsize{$h_{m}$}}}&\multicolumn{1}{c|}{\textbf{\scriptsize{$\beta_{alg}$}}}&\multicolumn{1}{c|}{\tiny{$h_{m}$}}&\multicolumn{1}{c||}{\tiny{$\beta_{alg}(3\%)$}}&\multicolumn{1}{c|}{\tiny{$h_{m}$}}&\multicolumn{1}{c|}{\tiny{$\beta_{alg}$}}&\multicolumn{1}{c|}{\tiny{$h_{m}$}}&\multicolumn{1}{c||}{\tiny{$\beta_{alg}(3\%)$}}&\multicolumn{1}{c|}{\tiny{$h_{m}$}}&\multicolumn{1}{c|}{\scriptsize{$\beta_{alg}$}}&\multicolumn{1}{c|}{\tiny{$h_{m}$}}&\multicolumn{1}{c||}{\tiny{$\beta_{alg}(3\%)$}}&\multicolumn{1}{c|}{\tiny{$h_{m}$}}&\multicolumn{1}{c|}{\tiny{$\beta_{alg}$}}&\multicolumn{1}{c|}{\tiny{$h_{m}$}}&\multicolumn{1}{c}{\tiny{$\beta_{alg}(3\%)$}}\\
\hline
 3&  1& 1.1050 & 2 &1 & 1 & 1.0308& 2 &1      & 1 & 1.1109  & 1 &1.0057 &1 &1.1054 &1&1 \\
 4&  1& 1.1275 & 3 &1 & 1 & 1.3206& 2 &1      & 1 & 1.2523  & 1 &1.0057 &1 &1.2312 &1&1  \\
 5&  1& 1.1632 & 2 &1 & 1 & 1.2800& 1 &1.2800 & 2 & 1.1109  & 1 &1      &1 &1.0434 &2&1 \\
 7&  1& 1.6060 & 2 &1 & 3 & 1.0308& 1 &1.0308 & 3 & 1.1763  & 1 &1      &1 &1.4202 &1&1 \\
10& --& --     &--&--& --&   --  &-- &--      & 1 & 1.7094  & 2 &1      &1 &1.4604 &2&1      \\
13& --&--      &--&--& --&   --  &-- &--      & 2 & 1.8579  & 1 &1.0057 &3 &1.6887 &1&1.1054 \\
\hline
\end{tabular}}
\end{table*}

Our expectation before the experiments was that the LOM heuristic
would be characterized by overly higher number of migration hops
since the latter is lower bounded by $D_{gen}/R$ when the service
reaches the globally optimal location. Nevertheless, and
interestingly enough, the LOM approach combines high excess costs
with generally small number of migration hops, irrespective of the
service generation location and for all topologies. Selecting
``blindly'' the $R$-hop neighbors of the current service host as
future candidate hosts, LOM effectively introduces noise to the
mechanism's effort to detect the cost-effective service migration
direction. With LOM the nodes in the 1-median subgraph are spread
more unidirectionally around the service host and the demand
mapping process projects more uniformly the demand contributions
of the remaining nodes on them. Consequently, the migrating
service gets easily trapped in some local minimum, which forces
the migration process to stop too early to achieve an efficient
solution. This resembles the behavior of the cDSMA in grids
under uniform demand. There it was the topology of the network
that induced a more local 1-median subgraph and attenuated the
attraction force towards the optimal. With LOM, this locality is
inherently imposed a-priori by the method, with similarly negative
results.

On the other hand, the cDSMA heuristic seeks to choose the most
``appropriate'' candidate hosts, capable of leading the service
fast to preferable/cost-minimizing locations, no matter what the
shape/radius of the emerging $G^{i}$ neighborhood would be.
%When
%-occasionally- LOM reaches the optimal host, it does so with more
%migration hops than ALGNAME \textbf{(pointer to table)}.
Whereas, in a couple of cases that both approaches are trapped to
a suboptimal place, e.g., Dataset 33 and $D_{gen}$=7, LOM needs
three hops to get there, whereas cDSMA aborts after one hop.

This capability of cDSMA to make longer migration hops and
accelerate its convergence to the (sub)optimal service location
has another positive effect: the migration hop-count remains
largely independent of the service generation host. This means
that the mechanism does not favor nodes according to their
proximity to the service demand and/or network topology hot spots,
inducing a less dramatic yet welcome notion of fairness in the
performance different network users get.

\section{Related work}
\label{Related_work}

The problem of service placement has been predominantly treated as
an instance of the broader family of (metric) facility location
(FL) problems, which have found many different applications in
areas as diverse as transportation networks and distributed
computing. (Un)Capacitated FL problems is probably the most
popular problem variant, where the objective is to minimize the
combined cost of opening a facility and serving its clients and
the number of facilities is not a priori bounded. The problem we
address instead is an instance of k-median, $k=1$ in our case,
problems, where no opening cost exists and the operational
facilities cannot exceed $k$. Both problems are NP-hard for
general topologies~\cite{mirchandani1990,kariv1979}; thus, various
approximations commonly requiring exact knowledge about their
inputs, have been proposed to address
them~\cite{alg_k_Med,EvaTardos}.

%ttractive yet demanding research thread mainly due to its wide
%applicability ranging from transportation networks to distributed
%computing. Theoretically, it is related to two optimization
%problems, namely, the \textit{k}-median (an istance of which, was
%addressed herein) and the \textit{facility location} problem; the
%latter, given a cost metric associated with the opening of each
%service facility, involves the decision on both the number and
%locations of the facilities to be placed in a way that the
%cummulative service and opening cost would be minimized. Both
%problems are NP-hard for general
%topologies~\cite{mirchandani1990,kariv1979} and, thus, various
%approximations commonly requiring exact knowledge about their
%inputs, have been proposed to address
%them~\cite{alg_k_Med,EvaTardos}.

The proposed approaches are typically categorized to centralized
and distributed. The applicability of centralized solutions to
large-scale data networks is severely undermined by the need for
centralized decision-making and collection of global information
about service demand and topology. In particular when this
information varies dynamically, as with mobile networks,
distributed solutions become mandatory and have recently received
%the focus is on the distributed-fashion solutions; moreover, when
%limitations by possible dynamicity of mobile networks are imposed,
%devising distributed solution has received renewed
renewed attention~\cite{1372881}.

One recently initiated research thread relates to the
\emph{approximability} of distributed approaches to the facility
location problem. Moscibroda and Wattenhofer in
\cite{Moscibroda05} draw on a primal-dual approach earlier devised
by Jain and Vazirani in \cite{JainVazi01}, to derive a distributed
algorithm that trades-off the approximation ratio with the
communication overhead under the assumption of $O(logn)$ bits
message size, where $n$ the number of clients. More recently,
Pandit and Pemmaraju have derived an alterative distributed
algorithm that compares favorably with the one in
\cite{Moscibroda05} in resolving the same trade-off
\cite{Pandit09}.
%me significant effort has been lately made towards the facility
%location distributed approximations yielding successively better
%ratios during the last few years [\textbf{latest ref: Return of
%the Primal-Dual???}]. In~\cite{citeulike:2391422} the facility
%location problem is addressed in a constant number of
%communication rounds. The method iteratively approximates the
%parameters of a primal-dual problem and then a distributed
%randomized rounding technique is applied to map the fractional
%results back to the original problem. Still, the algorithm
%requires some global view to compute and distribute a coefficient
%to all nodes, before execution, as well as communication among all
%relevant clients and facilities in each round. Furthermore, the
%approximation guarantees provided by such algorithms,   for some
%instances reaching almost the xx factor [??], can be typically
%outperformed by practical heuristic solutions.

Although the approximability studies can provide provable bounds
for the run time and the obtainable quality of the solutions, they
are typically outperformed by less mathematically rigorous yet
practical heuristic solutions. Common to most of them is the
service \emph{migration} from the generator host towards its
optimal location through a number of locally-determined hops that
delineate a cost-decreasing path. What changes is the way
decisions are made. Oikonomou \etal in ~\cite{OikonomouSX08}
exploit the shortest-path tree structures that are induced on the
network graph by the routing protocol operation to estimate upper
bounds for the aggregate cost in case service migrates at the
1-hop neighbors. Migration hops are therefore one physical hop
long and this decelerates the migration process, especially in
larger networks. Our algorithm resolves more efficiently the
trade-off between convergence speed and accuracy; in fact, cDSMA
maintains consistently high convergence speed over the real-world
topologies while achieving very-close-to-optimal placements.

%Having the service demand of all nodes belonging to subtrees
%routed at the current service host, aggregated at the host's first
%neighbours, the authors demonstrate (by analytical means) policies
%to migrate the service following (untill the end) a cost metric
%decreasing path, when some criteria referring to the number of
%facilities, link weights and topology, hold. A combination of
%policies is also applied to avoid one-hop tentative moves that
%otherwise introduce overhead. Simulation, albeit conducted only on
%synthetic topologies, shows that combining policies can lead the
%service to cost-minimizing locations but compared to our approch
%it yields poor results in terms of both accuracy and convergence
%speed.

Even closer to our work is the upcoming paper of Smaragdakis
\etal~\cite{SLOSB-DSMSISD-10}. They reduce the original k-median
problem in multiple smaller-scale 1-median problems solved within
an area of \textit{r-hops} from the current location of each
service facility. Compared to cDSMA, the area over which they
search for candidate next service hosts and upon which they map
demand from the ``outer'' nodes is the \textit{r}-hop
neighbourhood of the current service location.
In~\ref{subsec:positioning} we have discussed in detail how cDSMA
compares with a similar, local-search oriented approach.

Finally, cDSMA is an instance of a mechanism, where insights from
Complex Network Theory help improve the performance of a network
operation (\emph{here:} service migration and optimal placement)
significantly. Two more such examples have been reported in the
area of Delay Tolerant Networks, where CNA has inspired the
derivation of new routing protocols that, when correctly tuned,
can improve performance significantly over more naive approaches
\cite{SimBet, BubbleRap}.

\section{Discussion-conclusions}\label{sec:discussion}

%\textbf{Main points:} \\
%a) Summarize results, draw hints for appropriateness of the algorithm.
%Consider scenarios that favor it. \\
%b) Discuss what information is needed and how could it be
%collected \\
%c) Discuss assumptions and how could they be relaxed (e.g.,
%shortest path rou ting and RW-based or k-centrality) \\

Networked communication becomes more and more user-oriented. After
the success of user-generated content, user-oriented service
creation emerges as a new paradigm that will let individual users
generate and make available services at minimum programming
effort. Scalable distributed service migration mechanisms will be
key to the successful proliferation of the paradigm.

We have mimicked earlier research work in treating the service
placement problem in the general context of facility location
problems. We have departed from it in exploiting complex network
analysis for coming up with a scalable distributed service
migration mechanism. We introduced a metric, weighted Conditional
Betweeness Centrality ($wCBC$) that captures the topological
centrality and demand aggregation capacity of individual nodes.
The metric is used to select a small subset of significant nodes
for solving the 1-median problem as well as easily map the demand
of the remaining nodes on this subset. The service facilities
migrate in the network towards the (sub)optimal location along a
cost-decreasing path determined iteratively at the few
intermediate service host nodes.

Both the network topology and spatial dynamics of service demand
affect the accuracy and the convergence speed of the algorithm,
giving rise to stronger/lighter attraction forces that drag the
migrating service facilities towards the optimal location. In
general, the higher the asymmetry in either of the two, the better
the performance of the algorithm. The exhaustive evaluation of our
algorithm on real-world topologies suggests that very good
accuracy can be obtained when solving the 1-median problem with a
very small number, in the order of ten, of nodes with the highest
$wCBC$ scores. The result is insensitive to the network size and
diameter and the asymmetry of demand distribution, hinting that
real-world topologies have enough asymmetry to yield good
performance of the algorithm. Moreover, the algorithm outperforms
locality-oriented service migration and its accuracy and
convergence speed are not dependent on the position of the service
generation.

The proposed mechanism is highly decentralized; all
nodes-candidates to host a service share the decision-making
process for optimally placing the service in the network. It is
also scalable in that it copes with the computational burden
related to the solution of the $1$-median problem; this may become
a difficult task for large-scale networks, especially when changes
in the service demand characteristics call for its repeated
execution. Nevertheless, topological and demand information still
needs to propagate in the network. For small-size networks,
topological information may become available through the operation
of a link-state routing protocol that distributes and uses global
topology information. For larger-scale networks, one way to
acquire topology information would be through the deployment of
some source-routing or path-switching protocol that carries
information about the path it traverses on its headers.
Information about the interest in services, on the other hand, may
need more effort. Users increasingly subscribe to social
networking sites and, sometimes consciously, give information
about their interests and preferences. Profile-building mechanisms
are components of peer-to-peer protocols as well; our mechanism
will also be ultimately part of such a protocol.

Our problem formulation and the $wCBC$ metric we introduced for
harnessing the computational burden of the $1$-median problem
solution assume that the network exercises minimum hop count
routing. Although minimum hop count routing is both simple and
popular, network traffic engineering requires more elaborate
routing solutions such as load-balancing/load adaptive routing
\cite{trafficengineering}. We could generalize our treatment of
the service migration problem to address these cases. First of
all, the (conditional) betweenness centrality factor in the $wCBC$
metric definition is inherently flexible in that it considers
shortest paths. Different routing metrics can be accommodated
through changing the context of (shortest) path. For example, we
could consider weighted graphs, where link weights may represent
link capacities or propagation delays. A more substantial change
in the metric would be to replace the shortest-path betweenness
centrality with alternative $BC$ definitions: the
\emph{random-walk betweeness centrality} \cite{Newman200539},
which would resemble more a probabilistic, traffic demand
oblivious routing implementation, or the \emph{k-betweenness
centrality} \cite{k-BC09}, which is closer to some short of
multipath routing, even if it does not enforce independent,
link/node disjoint paths. On the other hand, to accommodate
other-than-min-hop-count routing in the content access leg, we
would need a fundamental adaptation of the $1$-median problem
formulation.

\section*{Acknowledgements}

We would like to thank Andrea Passarella from CNR-Pisa and Xenofontas Dimitropoulos from ETH Zurich for insightful conversations
and comments.

\bibliographystyle{IEEEtran}
\bibliography{paper2_ref}

%If needed

\end{document}